\documentclass[11pt]{article}

\usepackage[margin=1in]{geometry}

\usepackage[utf8]{inputenc}
\usepackage[T1]{fontenc}
\usepackage{natbib}
\usepackage{hyperref}
\usepackage{url}
\usepackage{booktabs}
\usepackage{amsfonts}
\usepackage{amsmath}
\usepackage{amssymb}
\usepackage{amsthm}
\theoremstyle{plain}
\newtheorem{theorem}{Theorem}[section]
\newtheorem{corollary}[theorem]{Corollary}
\theoremstyle{definition}
\newtheorem{definition}[theorem]{Definition}
\newtheorem{example}[theorem]{Example}
\usepackage{nicefrac}
\usepackage{microtype}
\usepackage{graphicx}
\usepackage{xcolor}
\definecolor{heatlow}{RGB}{33,102,172}
\definecolor{heatmid}{RGB}{247,247,247}
\definecolor{heathigh}{RGB}{178,24,43}
\usepackage{algorithm}
\usepackage{algorithmic}
\usepackage{subcaption}
\usepackage{multirow}
\usepackage{tikz}
\usetikzlibrary{shapes,arrows,positioning,fit,backgrounds}
\usepackage{pgfplots}
\pgfplotsset{compat=1.18}
\usepgfplotslibrary{colorbrewer}
\usetikzlibrary{pgfplots.colorbrewer}

\newcommand{\agent}{\mathcal{A}}
\newcommand{\graph}{\mathcal{G}}
\newcommand{\policy}{\mathcal{P}}
\newcommand{\invariant}{\mathcal{I}}
\newcommand{\stt}{\textnormal{\textsc{STT}}}
\newcommand{\sentinel}{\textnormal{\textsc{Distributed Sentinel}}}

\title{Beyond Single-Agent Alignment: Preventing Context-Fragmented Violations in Multi-Agent Systems}

\author{
  Jie Wu\thanks{\texttt{jwu10@atlassian.com}} \\
  Atlassian \and Ming Gong \\
  Atlassian
}

\date{}

\begin{document}

\maketitle

\begin{abstract}

As Large Language Model (LLM) agents proliferate across enterprise environments, \emph{agent sprawl} has created complex cross-departmental collaboration chains. We identify and formalize a new security risk: \textbf{Context-Fragmented Violations (CFVs)}---a class of policy breaches where individual agent actions appear locally safe and reasonable, yet collectively violate organizational policies because critical policy facts are siloed in different departments' private contexts.

Existing prompt-based alignment mechanisms and monolithic interceptors are poorly matched to violations that span ``contextual islands.'' We propose \sentinel{}, a distributed zero-trust enforcement architecture that introduces the \textbf{Semantic Taint Token (\stt{}) Protocol}. Through lightweight sidecar proxies, our system propagates security state across organizational boundaries without exposing raw cross-domain data, enabling \textbf{Counterfactual Graph Simulation} for cross-domain policy verification.

We construct \textbf{PhantomEcosystem}, a benchmark comprising 9 categories of realistic cross-agent violation scenarios with adversarially balanced safe controls. On this benchmark, \sentinel{} achieves F1 = 0.95 with 106ms end-to-end latency (16ms verification + 90ms entity extraction on A100), compared to 0.85 F1 for prompt-based filtering and 0.65 for rule-based DLP. To empirically validate the need for external enforcement, we evaluate eight frontier LLMs (GPT-5.4, GPT-4.1, Claude Opus~4.6, Gemini~2.5~Pro, and others) in execution-oriented multi-agent workflows with per-agent domain world models. All models exhibit substantial violation rates (14--98\%), with cross-domain data flows showing systematically higher violation rates than same-domain flows. These results indicate that self-avoidance is unreliable and that multi-agent security benefits from a centralized enforcement layer operating above individual agents.

\end{abstract}

\section{Introduction}
\label{sec:introduction}

The rapid deployment of Large Language Model (LLM) agents in enterprise settings has fundamentally transformed organizational workflows. From automated code review to customer service, from financial analysis to marketing campaign generation, LLM agents now operate across virtually every business function~\citep{xi2023rise,wang2024survey}. However, this proliferation has created an emerging crisis that we term \emph{agent sprawl}---the uncontrolled growth of autonomous agents operating across organizational boundaries with minimal coordination.

\subsection{The Core Tension: Commonsense vs. Organizational Context}

Current AI safety research predominantly focuses on ``universal alignment''---ensuring models refuse to generate violent content, avoid leaking personal information, or abstain from harmful instructions~\citep{ouyang2022training,bai2022constitutional}. While these efforts address \emph{general} safety concerns, they fundamentally miss a critical category of risks: violations defined by \textbf{organization-specific context}.

Consider the following scenario, which we call the \textbf{Project Titan Disaster}:

\begin{quote}
\emph{Background}: The R\&D department is conducting a secret project ``Titan'' protected under NDA.

\emph{R\&D Agent}: Generates an internal log stating ``Fixed concurrency bug in Titan module.'' From the R\&D agent's perspective, synchronizing bug fixes internally is entirely legitimate.

\emph{Marketing Agent}: Receives this log and, following its directive to keep customers informed, drafts a global customer update email highlighting recent improvements.

\emph{Failure Point}: The LLM's commonsense knowledge tells it that ``fixing bugs'' is positive information worth sharing. However, it has no knowledge that ``Titan'' is a restricted NDA codename that must never appear in external communications.
\end{quote}


This example crystallizes a fundamental insight: \textbf{security is no longer a question of model intelligence, but of state synchronization}. The marketing agent is not malicious, nor is it ``unaligned'' in any traditional sense. It simply lacks access to a critical fact---that ``Titan'' carries confidentiality constraints---because this fact exists only in the R\&D department's private context.

\subsection{Context-Fragmented Violations: A New Threat Model}

We introduce the concept of \textbf{Context-Fragmented Violations (CFVs)} to characterize this class of security breaches. Formally, a CFV occurs when:

\begin{enumerate}
    \item \textbf{Local Legitimacy}: Each individual agent action, evaluated against the acting agent's local knowledge graph, appears policy-compliant.
    \item \textbf{Global Violation}: The collective action sequence, evaluated against a hypothetical omniscient global view, violates organizational invariants.
\end{enumerate}

CFVs are particularly insidious because they evade all existing defense mechanisms:

\begin{itemize}
    \item \textbf{Prompt-based guardrails} fail because the violating agent genuinely believes its action is safe---there is no malicious intent to detect.
    \item \textbf{Monolithic interceptors} fail because they lack access to the distributed policy facts needed for accurate judgment.
    \item \textbf{Human oversight} fails at scale because the volume of cross-agent communications exceeds human review capacity.
\end{itemize}

\subsection{Semantic Laundering: The Persistence Problem}

A related phenomenon we identify is \textbf{Semantic Laundering}---the gradual loss of security attributes (``taint'') as information flows through multiple LLM nodes. When an LLM summarizes, paraphrases, or translates content, the original security constraints attached to that content may be stripped away, even if the semantic meaning is preserved.

For instance, if ``Project Titan bug fix'' is summarized as ``recent stability improvements,'' a downstream agent may have no way to trace this innocuous phrase back to its NDA-protected origin. This creates a fundamental challenge: \emph{security constraints must be attached to data provenance, not to surface-level text}.

\subsection{Contributions}

Our contributions are:

\begin{enumerate}
    \item \textbf{Problem Formalization}: We formally define Context-Fragmented Violations and Semantic Laundering as first-class security threats in multi-agent systems (\S\ref{sec:problem}).
    
    \item \textbf{PhantomEcosystem Benchmark}: We construct a benchmark with 9 categories of CFV scenarios, each with adversarially balanced safe controls, enabling rigorous evaluation of cross-agent security systems (\S\ref{sec:benchmark}).
    
    \item \textbf{Distributed Sentinel Architecture}: We propose a distributed enforcement architecture featuring Semantic Taint Tokens (\stt{}), sidecar-based interception, and cross-domain predicate queries that preserve departmental data sovereignty (\S\ref{sec:methodology}).
    
    \item \textbf{Empirical Validation}: \sentinel{} achieves F1 = 0.95 on PhantomEcosystem with 106ms end-to-end latency, substantially outperforming existing approaches (0.85 for prompt-based filtering, 0.65 for rule-based DLP). Ablation studies confirm that each component contributes meaningfully, with the system degrading to 0.53 F1 without STT propagation (\S\ref{sec:experiments}).
    
    \item \textbf{Multi-Agent Violation Study}: We evaluate eight frontier LLMs across three vendors (OpenAI, Anthropic, Google) in execution-oriented multi-agent workflows with per-agent domain world models. All models exhibit 14--98\% violation rates, with cross-domain data flows as the structural driver. This provides the first systematic empirical evidence that CFVs are a real and pervasive threat requiring external enforcement (\S\ref{sec:multi-agent-eval}).
\end{enumerate}

Our work establishes that multi-agent security can be reframed from a fuzzy language understanding problem to a deterministic distributed systems state consistency problem---a shift with profound implications for industrial deployment of AI agents.

\section{Problem Formalization}
\label{sec:problem}

We now provide a rigorous mathematical framework for Context-Fragmented Violations and establish the theoretical foundations for our solution.

\subsection{System Model}

We model an organization as a distributed collection of intelligent agents $\agent = \{A_1, A_2, \ldots, A_n\}$, where each agent $A_i$ operates within a specific organizational domain (e.g., R\&D, Marketing, HR, Finance). 

\begin{definition}[Local World Model]
Each agent $A_i$ maintains a private local knowledge graph $G_i = (V_i, E_i)$, where:
\begin{itemize}
    \item $V_i$ represents entities known to $A_i$ (projects, employees, documents, etc.)
    \item $E_i$ represents relationships and attributes (ownership, classification, permissions)
\end{itemize}
\end{definition}

The local graphs are \emph{disjoint by default}---agent $A_i$ cannot directly access $G_j$ for $i \neq j$. This reflects the reality of organizational silos where departments maintain separate knowledge bases, access controls, and data governance policies.

\begin{definition}[Global Graph]
The hypothetical global graph $\graph_{Global} = \bigcup_{i=1}^{n} G_i$ represents the union of all local graphs. This graph is never fully materialized in our system but serves as a theoretical construct for defining policy violations.
\end{definition}

\subsection{Policy Framework}

\begin{definition}[Policy Constraint]
A policy constraint $\policy$ is a predicate over graph states and actions:
\begin{equation}
    \policy: \graph \times \mathit{Action} \rightarrow \{\textsc{Allow}, \textsc{Block}\}
\end{equation}
\end{definition}

\begin{definition}[Invariant]
An organizational invariant $\invariant$ is a property that must hold across the global graph at all times. Invariants encode high-level security requirements such as:
\begin{itemize}
    \item ``NDA-protected entities must not appear in external communications''
    \item ``Personal salary information must not be accessible to non-HR agents''
    \item ``Deprecated API endpoints must not be recommended to customers''
\end{itemize}
\end{definition}

\subsection{Context-Fragmented Violations: Formal Definition}

\begin{definition}[Context-Fragmented Violation]
\label{def:cfv}
A cross-agent action sequence $\mathcal{S} = \langle a_1, a_2, \ldots, a_m \rangle$ constitutes a Context-Fragmented Violation if and only if:

\textbf{Condition 1 (Local Legitimacy):} For each action $a_k \in \mathcal{S}$ executed by agent $A_i$:
\begin{equation}
    \text{Verify}(a_k, G_i) = \textsc{Allow}
\end{equation}
That is, every action passes the local policy check when evaluated against the executing agent's private knowledge graph.

\textbf{Condition 2 (Global Violation):} When evaluated against the global graph:
\begin{equation}
    \text{Verify}(\mathcal{S}, \graph_{Global}) = \textsc{Block}
\end{equation}
The action sequence violates at least one organizational invariant $\invariant$ when the full context is available.
\end{definition}

\begin{theorem}[Fundamental Limitation of Local Enforcement]
\label{thm:local-limitation}
For any CFV $\mathcal{S}$, there exists no local enforcement mechanism $\mathcal{E}_i$ operating solely on $G_i$ that can reliably detect the violation without additional cross-domain information.
\end{theorem}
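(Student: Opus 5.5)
The plan is an indistinguishability (adversary) argument. We will model a local enforcement mechanism $\mathcal{E}_i$ as any function, deterministic or randomized, whose output on an action $a$ depends only on $(a, G_i)$ and returns $\textsc{Allow}$ or $\textsc{Block}$. Here ``reliably detect'' means $\mathcal{E}_i$ returns $\textsc{Block}$ on every globally violating sequence and $\textsc{Allow}$ on every globally compliant one. Fix a CFV $\mathcal{S} = \langle a_1,\ldots,a_m\rangle$. By Condition 2 of Definition~\ref{def:cfv}, some invariant $\invariant$ fails on $\graph_{Global} = \bigcup_j G_j$. By Condition 1, each $a_k$ executed by $A_i$ is allowed by the check on $G_i$. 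This means the fact witnessing the violation is not contained in $G_i$ alone, for instance the edge marking ``Titan'' as NDA-restricted.

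First, we isolate the witness. We choose an index $k$ and agent $A_i$ executing $a_k$, together with a set of facts $W \subseteq \bigcup_{j\neq i} G_j \setminus G_i$ such that removing $W$ from the global graph restores compliance. We then define a companion global graph
\[
\graph'_{Global} = G_i \cup \bigcup_{j \neq i} G'_j ,
\]
where each $G'_j$ is obtained from $G_j$ by deleting, or neutralizing, the facts in $W$. For example, we reclassify ``Titan'' as public. Because local graphs are disjoint by default, this modification leaves $G_i$ and the action sequence $\mathcal{S}$ unchanged. The sequence $\mathcal{S}$ is then globally compliant under $\graph'_{Global}$.

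Second, we use the indistinguishability. The mechanism $\mathcal{E}_i$ sees identical inputs $(a_k, G_i)$ in both worlds, so its output distribution is the same in each. If it blocks, it produces a false positive in the compliant world. If it allows, it misses the CFV in the violating world. In the randomized case, the error probability in one of the two worlds is at least $1/2$. Either way, reliable detection fails. The argument extends to the whole sequence: every agent's view $G_{i(k)}$ can be held fixed simultaneously, since we only alter facts outside all acting agents' graphs. The one exception is a witness fact split across several acting agents. We handle this by applying the argument to the single agent at which $\mathcal{E}_i$ is supposed to fire.

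The main obstacle is the first step. The witness $W$ must be removable without disturbing $G_i$, and the resulting world must be globally compliant. This needs a monotonicity or locality assumption on $\text{Verify}$ and the invariants that the paper leaves informal. I would try first to assume that invariants are determined by facts in $\graph_{Global}$, so that deleting the violating facts from foreign graphs yields a consistent organization. I would state this explicitly as a modeling assumption. The phrase ``without additional cross-domain information'' is then formalized as $\mathcal{E}_i$ being measurable with respect to $G_i$ alone.
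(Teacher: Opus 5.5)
Your proposal is sound at the paper's level of rigor and uses the same indistinguishability idea, but it builds the companion instance differently. The paper keeps the organization fixed and posits a safe sequence $\mathcal{S}'$ that looks identical to $A_i$. You instead keep $\mathcal{S}$ and $G_i$ fixed and alter the \emph{foreign} graphs by neutralizing a witness set $W \subseteq \bigcup_{j\neq i} G_j \setminus G_i$.

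Your route has several advantages:
\begin{itemize}
\item it reads ``without additional cross-domain information'' literally;
\item it makes the companion explicit, formalizes ``reliably'', and handles randomized mechanisms;
\item it fits the paper's Titan example better. With Titan still NDA-protected, any sequence Marketing cannot tell apart from $\mathcal{S}$ ends in the same external email. So the safe companion essentially has to be a different world, not a different sequence.
\end{itemize}

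The paper's route, in turn, covers compositional CFVs (aggregation, time-series, multi-hop), where what $\mathcal{E}_i$ lacks is other agents' \emph{actions}. In those cases a removable foreign $W$ need not exist. Your step ``Condition 1 implies the witness is not in $G_i$'' also does not follow there, since Condition 1 is a per-action check and Condition 2 is a sequence-level one.

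Your proposed hypothesis (invariants are determined by facts of $\graph_{Global}$) gives consistency of the modified world but does not supply $W$. You have two options:
\begin{itemize}
\item assume the existence of $W$ outright; or
\item split cases, and when no such $W$ exists, vary the other agents' actions instead, which $\mathcal{E}_i$ cannot observe because it depends only on $(a,G_i)$.
\end{itemize}
The second option also handles an acting agent whose own graph holds the witness, such as R\&D in the Titan example.

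Some such hypothesis is unavoidable in either proof, because Definition~\ref{def:cfv} constrains only $\text{Verify}$, not an arbitrary $\mathcal{E}_i$. The paper simply asserts that $\mathcal{S}'$ exists, so your version is no less rigorous than the original.

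Finally, your ``whole sequence'' extension is misstated. Editing foreign graphs cannot hold every acting agent's view fixed whenever the witness lies in \emph{any} acting agent's graph (again R\&D for Titan), not only when it is split across several agents. The theorem concerns a single $\mathcal{E}_i$, though, so it does not need this extension.
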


\begin{proof}
We construct an indistinguishability argument. Consider a CFV $\mathcal{S} = \langle a_1, \ldots, a_m \rangle$ where $a_k$ is executed by $A_i$. By Definition~\ref{def:cfv} (Condition 1), $\text{Verify}(a_k, G_i) = \textsc{Allow}$. Now consider a hypothetical safe action sequence $\mathcal{S}'$ that is identical to $\mathcal{S}$ from $A_i$'s perspective (i.e., produces the same local observations in $G_i$) but does \emph{not} violate global invariants. Since $\mathcal{E}_i$ has access only to $G_i$, it cannot distinguish $\mathcal{S}$ from $\mathcal{S}'$, and therefore cannot reliably detect the violation without information from $G_j$ ($j \neq i$). \qed
\end{proof}

\subsection{Semantic Laundering: Formal Definition}

\begin{definition}[Taint]
A taint $\tau$ is a security attribute attached to an entity $v \in V_i$, representing constraints on how information derived from $v$ may be used. We denote the taint set of entity $v$ as $\mathcal{T}(v)$.
\end{definition}

\begin{definition}[Semantic Laundering]
Let $\phi: \mathcal{T}ext \rightarrow \mathcal{T}ext$ be an LLM transformation (summarization, paraphrasing, translation). Semantic laundering occurs when:
\begin{equation}
    \mathcal{T}(\phi(t)) \subset \mathcal{T}(t)
\end{equation}
That is, the transformation strictly reduces the taint set, causing security constraints to be ``washed away'' despite semantic content being preserved.
\end{definition}

\begin{example}[Laundering Chain]
Consider the transformation sequence:
\begin{align*}
    t_0 &= \text{``Fixed critical bug in Project Titan authentication''} \\
    t_1 &= \phi_1(t_0) = \text{``Resolved authentication issue in flagship project''} \\
    t_2 &= \phi_2(t_1) = \text{``Improved login security''}
\end{align*}
If taint tracking relies on keyword matching, $\mathcal{T}(t_2)$ may be empty while $\mathcal{T}(t_0)$ contained \texttt{NDA:Titan}.
\end{example}

\subsection{Design Requirements}

Based on our formal analysis, we derive the following requirements for an effective defense mechanism:

\begin{enumerate}
    \item \textbf{Cross-Domain Visibility}: The system must enable policy decisions that incorporate facts from multiple organizational silos.
    
    \item \textbf{Data Sovereignty}: Departments must not be required to expose their private knowledge graphs to other departments or a central authority.
    
    \item \textbf{Laundering Resistance}: Security constraints must propagate based on data provenance, not surface-level text analysis.
    
    \item \textbf{Low Latency}: Enforcement overhead must be minimal to avoid disrupting agent workflows.
    
    \item \textbf{Monotonic Composition}: If an action is blocked by the union of local checks, it must also be blocked by the global check (and vice versa for critical invariants).
\end{enumerate}

\begin{theorem}[Monotonicity of Constraint Propagation]
\label{thm:monotonicity}
Let $\invariant_1, \invariant_2, \ldots, \invariant_k$ be invariants enforced across agents $A_1, A_2, \ldots, A_k$. If each $\invariant_i$ is \emph{monotone} (adding more facts to a graph cannot turn a \textnormal{\textsc{Block}} decision into \textnormal{\textsc{Allow}}), then:

\textbf{(a) Soundness of local blocks:} If any local invariant blocks, the global check also blocks:
\begin{equation}
    \exists\, i: \invariant_i(G_i, a) = \text{\textnormal{\textsc{Block}}} \implies \invariant_i(\graph_{Global}, a) = \text{\textnormal{\textsc{Block}}}
\end{equation}

\textbf{(b) Completeness of distributed checks:} If every security-relevant constraint resides in at least one $G_i$, then for any action $a$:
\begin{equation}
    \invariant_{Global}(\graph_{Global}, a) = \text{\textnormal{\textsc{Block}}} \implies \exists\, i: \invariant_i(G_i, a) = \text{\textnormal{\textsc{Block}}}
\end{equation}
\end{theorem}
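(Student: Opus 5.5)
Part (a) will follow directly from monotonicity combined with the fact that each local graph embeds into the global one. Since $\graph_{Global} = \bigcup_{j} G_j$, we have $G_i \subseteq \graph_{Global}$ as a subgraph: $V_i \subseteq V_{Global}$ and $E_i \subseteq E_{Global}$. So $\graph_{Global}$ is obtained from $G_i$ by adding facts. Suppose $\invariant_i(G_i,a)=\textsc{Block}$. Monotonicity says that adding facts cannot turn a \textsc{Block} into an \textsc{Allow}, which gives $\invariant_i(\graph_{Global},a)=\textsc{Block}$. To be careful, I will first restate monotonicity as order-preservation: for $G \subseteq G'$, $\invariant(G,a)=\textsc{Block}\Rightarrow \invariant(G',a)=\textsc{Block}$. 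Then (a) is the single instance $G=G_i$, $G'=\graph_{Global}$. One minor point needs checking: the union of local graphs must be a well-formed graph on which $\invariant_i$ is defined. This holds because the $G_j$ are disjoint by default, so the union is simply a disjoint union.

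Part (b) is the real content, and the hypothesis as stated is too informal to use. I will make it precise as a \emph{locality} assumption. Every violation of $\invariant_{Global}$ should have a witness, meaning a minimal set of facts $W \subseteq \graph_{Global}$ with $\invariant_{Global}(W,a)=\textsc{Block}$. ``Every security-relevant constraint resides in at least one $G_i$'' will then mean that every such witness is contained in a single local graph, $W \subseteq G_i$ for some $i$. I will also need $\invariant_i$ to agree with $\invariant_{Global}$ on facts in $G_i$, i.e.\ $\invariant_i$ is the restriction of the global invariant to domain $i$. With these two assumptions the argument is short. Take $a$ with $\invariant_{Global}(\graph_{Global},a)=\textsc{Block}$, pick a minimal witness $W$, and locate $i$ with $W\subseteq G_i$. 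Then $\invariant_i(W,a)=\textsc{Block}$, and monotonicity applied to $W\subseteq G_i$ gives $\invariant_i(G_i,a)=\textsc{Block}$.

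The main obstacle is the step in (b) that locates a witness inside a single $G_i$. Monotonicity by itself yields a witness only if blocking has a finite, compact character; this is automatic because the graphs are finite. But a witness could still straddle two domains, and that is exactly the CFV situation of Definition~\ref{def:cfv}, where Theorem~\ref{thm:local-limitation} shows local checks fail. So I will state explicitly that (b) requires the single-domain witness hypothesis, and that this is precisely the hypothesis the STT propagation is designed to enforce. When taints from $G_j$ are carried into the context evaluated at $A_i$, the effective local graph $G_i \cup \mathcal{T}$ contains the witness. Without that hypothesis (b) is false, and the counterexample is any CFV. I would note this as a remark to reconcile the theorem with Theorem~\ref{thm:local-limitation}.
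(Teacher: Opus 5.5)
Your proof is correct and follows the paper's route: (a) is $G_i\subseteq\graph_{Global}$ plus monotonicity, and (b) places a witness inside a single $G_i$ and transfers the block to $\invariant_i(G_i,a)$. Your formalization is tighter than the paper's. The paper treats the witness as one constraint $c$, so ``$c$ resides in some $G_i$'' is automatic from $\graph_{Global}=\bigcup_i G_i$ and does no work, and it silently assumes $\invariant_i$ blocks on $c$ exactly as $\invariant_{Global}$ does. Your single-domain-witness and restriction hypotheses are what that step actually uses.

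One correction to your closing remark: not every CFV is a counterexample to (b). Condition~1 of Definition~\ref{def:cfv}, like Theorem~\ref{thm:local-limitation}, constrains only the \emph{executing} agent's graph, whereas (b) quantifies over all $i$. In Project Titan the witness (Titan's NDA status) lies wholly in the R\&D graph, so (b) holds with $i$ the R\&D domain. That is exactly what the STT query exploits: it asks the source sidecar to evaluate the predicate on its own graph, rather than importing facts into the executor's graph. The genuine counterexamples are only violations all of whose witnesses straddle several $G_j$, such as aggregation-style cases. So (b) and Theorem~\ref{thm:local-limitation} are not in tension, and there is nothing to reconcile.
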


\begin{proof}
\textbf{Part (a):} Since $G_i \subseteq \graph_{Global}$, any blocking constraint $c$ witnessed in $G_i$ is also present in $\graph_{Global}$. By monotonicity, the addition of facts from other $G_j$ cannot override $c$, so $\invariant_i(\graph_{Global}, a) = \textsc{Block}$.

\textbf{Part (b):} If $\invariant_{Global}(\graph_{Global}, a) = \textsc{Block}$, there exists a constraint $c$ in $\graph_{Global}$ witnessing the violation. By assumption, $c$ resides in some $G_i$. Since $G_i \subseteq \graph_{Global}$ and $c \in G_i$, evaluating $\invariant_i$ on $G_i$ with the same action $a$ also yields \textsc{Block} (the constraint $c$ and the violating action $a$ are both present). \qed
\end{proof}

\begin{corollary}[Completeness of Distributed Enforcement]
\label{cor:completeness}
If every security-relevant constraint is represented in at least one local graph $G_i$, and the STT protocol ensures all relevant taints are propagated to the enforcing sidecar, then the distributed enforcement system detects all violations that would be detected by a centralized system with access to $\graph_{Global}$.
\end{corollary}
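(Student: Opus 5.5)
The corollary follows from Theorem~\ref{thm:monotonicity}(b) once we pin down what ``the distributed enforcement system'' evaluates. I model the enforcing sidecar at agent $A_i$ as evaluating the invariant on an \emph{augmented} local view $\widehat{G}_i = G_i \cup \mathcal{T}_{\mathrm{prop}}$. Here $\mathcal{T}_{\mathrm{prop}}$ is the set of taints delivered to it by the \stt{} protocol. I read the hypothesis ``all relevant taints are propagated to the enforcing sidecar'' as follows: for every action $a$ and every constraint $c \in \graph_{Global}$ that witnesses a violation by $a$, the corresponding taint appears in $\widehat{G}_i$ for the sidecar that intercepts $a$.

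\textbf{Step 1.} Fix an action $a$ that the centralized system blocks, i.e.\ $\invariant_{Global}(\graph_{Global}, a) = \textsc{Block}$. As in the proof of Theorem~\ref{thm:monotonicity}(b), extract a witnessing constraint $c \in \graph_{Global}$. By the first hypothesis, $c$ resides in some $G_j$.

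\textbf{Step 2.} Theorem~\ref{thm:monotonicity}(b) only yields $\invariant_j(G_j, a) = \textsc{Block}$ at the owning domain $j$. That domain need not be the one intercepting $a$. This is exactly the gap that Theorem~\ref{thm:local-limitation} says cannot be closed locally. The \stt{} hypothesis closes it: the taint encoding $c$ is delivered to the intercepting sidecar at $A_i$, so $c \in \widehat{G}_i$.

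\textbf{Step 3.} Since $a$ is the action under evaluation and the witness $c$ is now present in $\widehat{G}_i$, the same argument as in part (b) gives $\invariant_i(\widehat{G}_i, a) = \textsc{Block}$. Monotonicity guarantees that the other facts in $\widehat{G}_i$ cannot override this decision. Hence every action blocked centrally is blocked by the distributed system, which is the claimed inclusion of detected violations. For completeness, I would also note the converse: $\widehat{G}_i \subseteq \graph_{Global}$, provided taints faithfully represent facts of $\graph_{Global}$. By part (a) applied to $\widehat{G}_i$, the distributed system then raises no blocks beyond the centralized one, so the two coincide.

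\textbf{Main obstacle.} The hard step is Step 2. The argument needs a precise model of the \stt{} guarantee: the taint carried to the sidecar must be semantically equivalent to $c$ for the purposes of $\invariant_i$, and must not be a lossy summary, since that would reintroduce laundering. The argument also quietly assumes that a single witness $c$ suffices, i.e.\ that violations are witnessed by individual facts rather than by a conjunction spread across several domains. I would state this explicitly as part of interpreting ``security-relevant constraint.'' If violations can require conjunctions, the fix is to require propagation of every fact in a witness set, and then rerun Steps 1--3 with that set in place of $c$.
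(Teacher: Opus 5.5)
Your skeleton matches the paper's sketch: Theorem~\ref{thm:monotonicity}(b) supplies a witness $c$ in some owning graph $G_j$, and \stt{} propagation brings it to bear on the intercepting sidecar. You differ in the last step. You \emph{import} the constraint into an augmented view $\widehat{G}_i = G_i \cup \mathcal{T}_{\mathrm{prop}}$ and rerun the monotonicity argument there. The paper instead treats the propagated token as a pointer (\texttt{src\_id}) that triggers a cross-domain query (Algorithm~\ref{alg:main}, Phase~3). The check is therefore evaluated back in domain $j$, exactly where part~(b) already certifies $\invariant_j(G_j,a)=\textsc{Block}$, and the returned \textsc{False} forces the block.

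On the paper's route, the step you call the main obstacle largely dissolves. Nothing semantically equivalent to $c$ has to exist at the enforcing site. The enforcing domain's own invariants need not include the one $c$ belongs to. And $c$ never leaves $G_j$, which is what the data-sovereignty design demands. The price is a different implicit faithfulness assumption: the query $(\texttt{node\_id}, \mathit{scope})$ must carry enough of $a$ for domain $j$ to reproduce $\invariant_j(G_j,a)$.

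Your route keeps the whole argument inside the monotonicity framework and fits the token's \texttt{constraints} field. Your explicit single-witness caveat is also a real gain in precision that the paper omits. Phase~3 asks each source separately and blocks only on an individual \textsc{False}. So a violation witnessed only by a conjunction of facts from several domains would slip through, unless the hypothesis is read as excluding it.

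One correction to your aside: the converse does not follow from the stated hypotheses. Theorem~\ref{thm:stt-correct} explicitly lets propagated taints over-approximate and cause false blocks. Also, part~(a) yields $\invariant_i(\graph_{Global},a)=\textsc{Block}$, not a block by $\invariant_{Global}$. So ``the two coincide'' needs extra assumptions, and the corollary only claims one inclusion anyway.
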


\begin{proof}[Proof Sketch]
Follows directly from Theorem~\ref{thm:monotonicity}(b) combined with Theorem~\ref{thm:stt-correct}. Part~(b) ensures that any global violation is witnessed in some local graph $G_i$. The STT protocol ensures that the enforcing sidecar receives the taint from $G_i$, triggering a cross-domain query that surfaces the blocking constraint. \qed
\end{proof}

\paragraph{Assumption Discussion.} The requirement that every security constraint resides in at least one $G_i$ is strong but realistically achievable in practice: it corresponds to the assumption that organizational policies are codified in departmental knowledge graphs. Constraints that exist only in unstructured documents or tacit human knowledge fall outside our model and represent a limitation (see \S\ref{sec:conclusion}).

\begin{theorem}[STT Propagation Correctness]
\label{thm:stt-correct}
Let $\tau(d)$ denote the true taint set of data $d$. Assume taints are attached to \emph{data provenance} (source graph nodes) rather than surface-level text. For any LLM transformation $\phi$ and STT propagation function $\pi$:
\begin{equation}
    \pi(\tau(d)) \supseteq \tau(\phi(d))
\end{equation}
That is, the propagated taint is a conservative over-approximation---it may over-estimate but never under-estimate the true security constraints.
\end{theorem}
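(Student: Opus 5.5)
The plan is to make the statement precise enough to prove by fixing what ``true taint'' and ``propagation'' mean under the provenance assumption, and then argue by induction over the transformation chain.

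\textbf{Formalizing the objects.} First I would define the true taint of any data item $d$ through its provenance. Let $\mathrm{prov}(d) \subseteq \bigcup_i V_i$ be the set of source graph nodes from which $d$ is derived. Then set
\[
\tau(d) = \bigcup_{v \in \mathrm{prov}(d)} \mathcal{T}(v).
\]
The key modelling fact about an LLM transformation $\phi$ is that it produces no information out of thin air with respect to protected sources. Whatever protected content $\phi(d)$ carries comes either from its input $d$ or from untainted model priors. So I would assume $\mathrm{prov}(\phi(d)) \subseteq \mathrm{prov}(d) \cup U$, where $U$ is a set of nodes with empty taint.

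Next I would define the STT propagation $\pi$ as the sidecar actually implements it. The sidecar wraps the call to $\phi$. It attaches to the output token every taint carried by the input token, plus any taints raised by entity extraction on the output:
\[
\pi(\tau(d)) = \tau(d) \cup \tau_{\mathrm{ext}}(\phi(d)) \supseteq \tau(d).
\]
Here the ``taint only ever grows'' behaviour is the defining feature of the token protocol.

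\textbf{Single-step inclusion.} The main chain of inclusions is
\[
\tau(\phi(d)) = \bigcup_{v\in \mathrm{prov}(\phi(d))}\mathcal{T}(v) \subseteq \bigcup_{v\in \mathrm{prov}(d)\cup U}\mathcal{T}(v) = \tau(d) \subseteq \pi(\tau(d)).
\]
The middle equality holds because nodes in $U$ contribute no taint.

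\textbf{Multi-hop chains.} I would then extend to a laundering chain $\phi_k\circ\cdots\circ\phi_1$ by induction on $k$. Set $\pi_k = \pi\circ\pi_{k-1}$ and let the inductive hypothesis be $\pi_{k-1}(\tau(d)) \supseteq \tau(\phi_{k-1}\circ\cdots\circ\phi_1(d))$. Applying the single-step inclusion to $d' = \phi_{k-1}\circ\cdots\circ\phi_1(d)$ gives $\tau(\phi_k(d'))\subseteq \tau(d')$. Monotonicity of $\pi$ under set inclusion then closes the step. This inductive step handles exactly the Laundering Chain example, where surface text loses the keyword but provenance does not.

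\textbf{Main obstacle.} The hard step is justifying the provenance-containment assumption $\mathrm{prov}(\phi(d)) \subseteq \mathrm{prov}(d)\cup U$. An LLM could in principle emit a protected fact it learned elsewhere, for example a codename present in its context window from another tainted message. I would handle this by declaring that the provenance of $\phi(d)$ includes everything in $\phi$'s context window, not just the nominal input $d$. The sidecar propagates the union of taints of all tokens entering that window. With that convention the containment holds by construction, and the theorem reduces to the set-theoretic chain above. I would also state explicitly that content in the model's pretrained weights is treated as untainted. This is the residual limitation the theorem cannot address, consistent with the assumption discussion after Corollary~\ref{cor:completeness}.
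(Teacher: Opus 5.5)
Your proposal is correct and follows essentially the same route as the paper's proof. The paper likewise argues that $\phi$ changes surface text but not provenance, and that $\pi$ is the \textsc{Most-Restrictive} union merge and so drops no source-node taint; you make the containment assumption $\mathrm{prov}(\phi(d)) \subseteq \mathrm{prov}(d)\cup U$ explicit where the paper only asserts it, and your chain induction and context-window convention are sound refinements of that argument rather than a different proof.
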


\begin{proof}
The key design choice is that STT taints track \emph{provenance} (which source graph nodes contributed to the data), not \emph{content} (what the text says). An LLM transformation $\phi$ may alter surface text---e.g., summarizing ``Project Titan bug fix'' into ``stability improvements''---but cannot change the provenance: the output still derives from the same source nodes. Since $\pi$ preserves all source-node taints and uses the \textsc{Most-Restrictive} union merge strategy:
\begin{equation}
    \pi(\tau_1, \tau_2) = \tau_1 \cup \tau_2
\end{equation}
no constraint can be lost during propagation. The only case where $\pi(\tau(d)) \supset \tau(\phi(d))$ is when $\phi$ genuinely \emph{removes} the semantic dependency on a source (e.g., the output is completely unrelated to the input), in which case over-estimation is safe---it may lead to false blocks but never missed violations. \qed
\end{proof}

This theorem establishes that provenance-based taint tracking is resistant to semantic laundering, unlike content-based approaches that can be defeated by paraphrasing.

\section{PhantomEcosystem Benchmark}
\label{sec:benchmark}

We construct \textbf{PhantomEcosystem}, a benchmark designed to stress-test detection of Context-Fragmented Violations in cross-agent security systems.

\subsection{Design Principles}

PhantomEcosystem is built on three core principles:

\paragraph{No Explicit Labels.} Communication content must not contain explicit markers such as ``Confidential,'' ``Restricted,'' or ``Internal Only.'' This forces systems to rely on graph-based state rather than surface-level keyword detection. Real-world sensitive information rarely comes with convenient labels.

\paragraph{Adversarial Balance.} Every violating case is paired with a visually similar safe control case. This tests the system's precision---can it distinguish between ``discussing Project Titan with a teammate'' (safe) and ``discussing Project Titan in a customer email'' (violation)?

\paragraph{Realistic Organizational Structure.} Scenarios reflect genuine enterprise structures with R\&D, Marketing, HR, Sales, Legal, and Finance departments, each with realistic knowledge graphs and inter-departmental communication patterns.

\subsection{Violation Categories}

PhantomEcosystem comprises nine categories of Context-Fragmented Violations, spanning direct exfiltration, compositional inference, and workflow-level laundering or bypass behaviors.

\begin{table}[ht]
\centering
\small
\caption{PhantomEcosystem Violation Categories}
\label{tab:categories}
\begin{tabular}{@{}p{2.2cm}p{5.2cm}p{4.0cm}@{}}
\toprule
\textbf{Category} & \textbf{Description} & \textbf{Representative Flow} \\
\midrule
Direct Leak & Sensitive data sent to unauthorized external destination & HR $\rightarrow$ external storage \\
Multi-hop & Data laundered through intermediate agents or tools & agents $\rightarrow$ multiple hops \\
Aggregation & Individually safe fields combine into sensitive conclusions & analytics $\rightarrow$ sales/HR \\
Time-series & Temporal patterns reveal strategic or restricted state & logs $\rightarrow$ downstream reporting \\
Side-channel & Restricted info inferred from metadata or behavior & external $\rightarrow$ internal signal \\
Scope creep & Initially legitimate workflows expand beyond authorized use & finance $\rightarrow$ operations \\
Data reconstr. & Partial fragments combine to reconstruct a secret & sources $\rightarrow$ complete secret \\
Cross-org & Internal data crosses organizational boundaries & R\&D $\rightarrow$ partner/competitor \\
Token manip. & Forged or altered credentials expand access scope & any actor $\rightarrow$ privileged resource \\
\bottomrule
\end{tabular}
\end{table}

At a high level, these categories fall into three families. \textbf{Boundary-crossing attacks} cover explicit exfiltration events such as \emph{direct leak} and \emph{cross-org}, where the main challenge is that local agents lack awareness of external disclosure constraints. \textbf{Compositional inference attacks} cover \emph{aggregation}, \emph{time-series}, \emph{side-channel}, and \emph{data reconstruction}, where no single message appears overtly harmful, yet the joint effect of multiple signals violates policy. \textbf{Workflow and control-path attacks} cover \emph{multi-hop}, \emph{scope creep}, and \emph{token manipulation}, where the violation emerges through delegation structure, tool chaining, or access-surface abuse rather than any one obviously forbidden act.

This grouped taxonomy serves two purposes. First, it ensures coverage across the main structural failure modes that arise when policy-relevant facts are fragmented across agents, tools, and time. Second, it allows us to evaluate whether a defense generalizes beyond a single archetype such as direct exfiltration. Detailed scenario construction and representative examples are deferred to Appendix~\ref{sec:benchmark-details}.

\paragraph{Dataset Statistics.}
PhantomEcosystem contains 200 scenarios: 160 attack cases and 40 legitimate controls, spanning 9 attack categories, 17 agent types, and average communication chains of length 2.4. Difficulty is balanced across easy/medium/hard settings (62/84/54). Most attack categories contain 20 cases; \emph{cross-org} and \emph{token manipulation} contain 10 each because they require more specialized setup. Table~\ref{tab:attack-dist} summarizes the resulting distribution.

\begin{table}[ht]
\centering
\small
\caption{Attack Distribution by Category}
\label{tab:attack-dist}
\begin{tabular}{@{}p{2.2cm}p{0.7cm}p{7.2cm}@{}}
\toprule
\textbf{Category} & \textbf{N} & \textbf{Representative Scenarios} \\
\midrule
Direct Leak & 20 & HR data to personal cloud; customer lists copied to shadow tools \\
Multi-hop & 20 & Restricted information laundered through 2--3 agent hops \\
Aggregation & 20 & Partial attributes combined for re-identification \\
Time-series & 20 & Temporal traces revealing strategic decisions \\
Side-channel & 20 & Metadata- or behavior-based inference attacks \\
Scope creep & 20 & Initially legitimate workflows that gradually expand in scope \\
Data reconstr. & 20 & Partial fragments combined to recover a secret \\
Cross-org & 10 & Data sent to partner or competitor endpoints \\
Token manip. & 10 & Forged or altered access token bypass \\
\bottomrule
\end{tabular}
\end{table}

\paragraph{Evaluation Metrics.}
We report precision, recall, and F1 for violation detection; latency from action initiation to enforcement decision; and privacy exposure in terms of what cross-domain information is revealed during verification.

\section{Methodology}
\label{sec:methodology}

\sentinel{} is a distributed zero-trust enforcement architecture that detects and prevents Context-Fragmented Violations while preserving departmental data sovereignty.

\subsection{Architecture Overview}


\sentinel{} employs a sidecar-based architecture inspired by service mesh patterns~\citep{servicemesh}. Each agent $A_i$ is paired with a \textbf{Sentinel Sidecar} $S_i$ that intercepts all incoming and outgoing communications, including messages and tool calls.

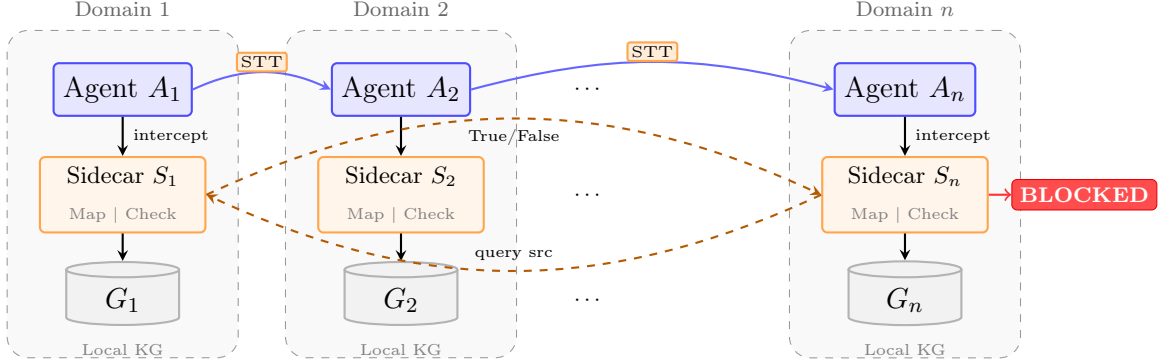
\begin{figure}[ht]
\centering
\begin{tikzpicture}[
    x=1.0cm,y=1.0cm,
    agent/.style={rectangle, rounded corners=2pt, draw=blue!70, fill=blue!10, thick, minimum width=1.8cm, minimum height=0.7cm, align=center},
    sidecar/.style={rectangle, rounded corners=2pt, draw=orange!70, fill=orange!8, thick, minimum width=2.2cm, minimum height=1.0cm, align=center},
    graph/.style={cylinder, shape border rotate=90, aspect=0.28, draw=gray!60, fill=gray!10, thick, minimum width=1.5cm, minimum height=0.7cm, align=center},
    flow/.style={->, thick, >=stealth},
    stt/.style={->, thick, dashed, draw=orange!70!black, >=stealth},
    msg/.style={->, thick, draw=blue!60, >=stealth},
    blocked/.style={rectangle, rounded corners=2pt, draw=red!90!black, fill=red!70, text=white, font=\scriptsize\bfseries, inner sep=3pt},
    tinylabel/.style={font=\scriptsize, align=center},
    sttbox/.style={rectangle, rounded corners=1pt, draw=orange!70, fill=orange!15, font=\tiny, inner sep=1.5pt},
    domainbox/.style={draw=black!40, dashed, rounded corners=8pt, inner sep=12pt}
]
    \node[agent]   (a1) at (-5.2, 1.6) {Agent $A_1$};
    \node[sidecar] (s1) at (-5.2, 0.2) {\footnotesize Sidecar $S_1$\\\tiny\textcolor{gray}{Map | Check}};
    \node[graph]   (g1) at (-5.2, -1.2) {$G_1$};
    \node[tinylabel, below=0.1cm of g1] {\textcolor{gray}{\tiny Local KG}};

    \node[agent]   (a2) at (-1.5, 1.6) {Agent $A_2$};
    \node[sidecar] (s2) at (-1.5, 0.2) {\footnotesize Sidecar $S_2$\\\tiny\textcolor{gray}{Map | Check}};
    \node[graph]   (g2) at (-1.5, -1.2) {$G_2$};
    \node[tinylabel, below=0.1cm of g2] {\textcolor{gray}{\tiny Local KG}};

    \node[tinylabel] at (1.0, 1.6) {$\cdots$};
    \node[tinylabel] at (1.0, 0.2) {$\cdots$};
    \node[tinylabel] at (1.0, -1.2) {$\cdots$};

    \node[agent]   (an) at (5.2, 1.6) {Agent $A_n$};
    \node[sidecar] (sn) at (5.2, 0.2) {\footnotesize Sidecar $S_n$\\\tiny\textcolor{gray}{Map | Check}};
    \node[graph]   (gn) at (5.2, -1.2) {$G_n$};
    \node[tinylabel, below=0.1cm of gn] {\textcolor{gray}{\tiny Local KG}};

    \begin{scope}[on background layer]
        \node[domainbox, fill=gray!5] (dom1) [fit=(a1)(s1)(g1)] {};
        \node[domainbox, fill=gray!5] (dom2) [fit=(a2)(s2)(g2)] {};
        \node[domainbox, fill=gray!5] (domn) [fit=(an)(sn)(gn)] {};
    \end{scope}
    \node[tinylabel, above=0.05cm of dom1.north] {\textcolor{black!60}{Domain 1}};
    \node[tinylabel, above=0.05cm of dom2.north] {\textcolor{black!60}{Domain 2}};
    \node[tinylabel, above=0.05cm of domn.north] {\textcolor{black!60}{Domain $n$}};

    \draw[flow] (a1) -- node[tinylabel, right, pos=0.5] {\tiny intercept} (s1);
    \draw[flow] (s1) -- (g1);
    \draw[flow] (a2) -- node[tinylabel, right, pos=0.5] {\tiny } (s2);
    \draw[flow] (s2) -- (g2);
    \draw[flow] (an) -- node[tinylabel, right, pos=0.5] {\tiny intercept} (sn);
    \draw[flow] (sn) -- (gn);

    \draw[msg, bend left=25] (a1.east) to node[sttbox, above, pos=0.5] {\stt{}} (a2.west);
    \draw[msg, bend left=15] (a2.east) to node[sttbox, above, pos=0.5] {\stt{}} (an.west);

    \draw[stt, bend left=25] (sn.west) to node[tinylabel, above, pos=0.5] {\tiny query src} (s1.east);
    \draw[stt, bend left=25] (s1.east) to node[tinylabel, below, pos=0.5] {\tiny True/False} (sn.west);
    
    \node[blocked, right=0.3cm of sn.east] (block) {BLOCKED};
    \draw[->, thick, red!80] (sn.east) -- (block.west);

\end{tikzpicture}
\caption{Distributed Sentinel architecture. Each domain contains an agent, sidecar, and local knowledge graph---forming isolated \emph{contextual islands}. Sidecars perform local entity mapping and policy checking; cross-domain verification queries the \stt{} source directly, returning only boolean results (True/False) to preserve data sovereignty. This \emph{boolean transparency} ensures no raw data crosses domain boundaries.}
\label{fig:architecture}
\end{figure}

\subsection{Semantic Taint Token (STT) Protocol}

The core innovation of \sentinel{} is the \textbf{Semantic Taint Token (\stt{})} protocol, which enables security state propagation without data exposure.

\subsubsection{Token Structure}

When agent $A_i$ sends a message $m$ to agent $A_j$, the sidecar $S_i$ computes and attaches an encrypted STT:

\begin{equation}
\stt{} = \langle \texttt{src\_id}, \texttt{taint\_vec}, \texttt{constraints}, \texttt{sig} \rangle
\end{equation}

where:
\begin{itemize}
    \item \texttt{src\_id}: Unique identifier of the source node in $G_i$
    \item \texttt{taint\_vec}: Bit vector encoding active security labels
    \item \texttt{constraints}: Serialized policy predicates (e.g., \texttt{audience} $\neq$ \texttt{External})
    \item \texttt{sig}: Cryptographic signature preventing tampering
\end{itemize}

\subsubsection{Lazy Materialization}

Critically, the STT does \emph{not} carry the full graph structure---only opaque references (``pointers'') to source nodes. This design ensures:

\begin{enumerate}
    \item \textbf{Lightweight transmission}: STT overhead is $O(1)$ regardless of graph size.
    \item \textbf{Privacy preservation}: The receiving sidecar cannot infer the source graph's structure.
    \item \textbf{Tamper resistance}: The signature prevents malicious modification of security claims.
\end{enumerate}

\subsection{Cross-Domain Predicate Query}

When agent $A_j$ prepares to execute a sensitive action (e.g., \texttt{send\_email}), the sidecar $S_j$ initiates a cross-domain verification:

\begin{algorithm}[ht]
\caption{Cross-Domain Predicate Query}
\label{alg:query}
\begin{algorithmic}[1]
\REQUIRE Action $a$, received STT tokens $\mathcal{T}$, local graph $G_j$
\ENSURE Decision $\in \{\textsc{Allow}, \textsc{Block}\}$

\STATE $scope \gets \texttt{extract\_target\_scope}(a)$ \COMMENT{e.g., External, Internal}

\FOR{each $token \in \mathcal{T}$}
    \STATE $query \gets \texttt{build\_predicate}(token.\texttt{src\_id}, scope)$
    \STATE $result \gets \texttt{remote\_query}(token.\texttt{src\_sidecar}, query)$
    \IF{$result = \textsc{False}$}
        \RETURN \textsc{Block}
    \ENDIF
\ENDFOR

\RETURN \textsc{Allow}
\end{algorithmic}
\end{algorithm}

\paragraph{Privacy-Preserving Responses.} The source sidecar $S_i$ responds only with boolean results (\textsc{True}/\textsc{False}), never exposing the underlying graph structure or entity details. For example:

\begin{quote}
\emph{Query}: ``Does entity \texttt{node\_42} permit \texttt{audience=External}?''\\
\emph{Response}: \textsc{False}
\end{quote}

The querying sidecar learns only that the action is prohibited, not \emph{why} (e.g., that \texttt{node\_42} represents an NDA-protected project).

\subsection{Neuro-Symbolic Entity Mapping}

A critical challenge is translating free-form agent communication into graph entities that can be checked by the symbolic policy engine. We therefore treat entity grounding as the \textbf{neuro-symbolic interface} between language-space interactions and policy-checkable system state, and define a \textbf{Neuro-Symbolic Mapping Function}:

\begin{equation}
    f: \mathcal{T}ext \times G_i \rightarrow 2^{V_i}
\end{equation}

This function takes text and a knowledge graph, returning the set of entities referenced.

\subsubsection{Implementation}

In our implementation, $f$ is instantiated by a lightweight grounding layer that combines LLM-based semantic interpretation with graph-constrained resolution. Concretely, the mapper performs:

\begin{enumerate}
    \item \textbf{Named Entity Recognition (NER)}: Identify entity mentions in text.
    \item \textbf{Session Provenance Alignment}: Leverage conversation history to resolve ambiguous references.
    \item \textbf{Graph Grounding}: Map recognized entities to specific nodes in $G_i$.
\end{enumerate}

This interface is systemically important rather than merely cosmetic: if free-form outputs are not grounded reliably, downstream STT propagation and distributed policy checking cannot operate on a stable symbolic state.

\paragraph{Provenance-Aware Resolution.} The key insight is that entity resolution can exploit \emph{session provenance}. If a message originates from the R\&D agent, mentions of ``the bug fix'' almost certainly refer to nodes in the R\&D subgraph's active paths, not arbitrary entities.

\subsection{Counterfactual Graph Simulation}

Before executing any action, \sentinel{} performs \textbf{Counterfactual Graph Simulation} to predict the consequences:

\begin{enumerate}
    \item \textbf{Fork}: Create a copy-on-write snapshot of the local graph.
    \item \textbf{Apply}: Simulate the proposed action's mutations on the forked graph.
    \item \textbf{Verify}: Check all invariants against the hypothetical post-action state.
\end{enumerate}

This approach catches violations \emph{before} they occur, enabling preventive rather than reactive enforcement.

\subsection{Core Algorithm: Distributed Verification}

Algorithm~\ref{alg:main} presents the complete verification procedure executed by each sidecar.

\begin{algorithm}[ht]
\caption{Distributed-Sentinel-Verify}
\label{alg:main}
\begin{algorithmic}[1]
\REQUIRE Action $a$, local graph $G_i$, received STT set $\mathcal{T}_{recv}$
\ENSURE Decision $\in \{\textsc{Allow}, \textsc{Block}\}$

\STATE \textbf{// Phase 1: Local Counterfactual Simulation}
\STATE $mutation \gets \texttt{translate\_to\_mutation}(a)$
\STATE $G'_i \gets G_i.\texttt{fork()}.\texttt{apply}(mutation)$

\STATE \textbf{// Phase 2: Local Invariant Check}
\FOR{each $inv \in \mathcal{I}_{local}$}
    \IF{$inv.\texttt{check}(G'_i) = \textsc{Block}$}
        \RETURN \textsc{Block}
    \ENDIF
\ENDFOR

\STATE \textbf{// Phase 3: Cross-Domain Verification}
\FOR{each $token \in \mathcal{T}_{recv}$}
    \STATE $scope \gets \texttt{extract\_scope}(a)$
    \STATE $query \gets (token.\texttt{node\_id}, scope)$
    \STATE $result \gets \texttt{remote\_sidecar}[token.\texttt{src}].\texttt{ask}(query)$
    \IF{$result = \textsc{False}$}
        \RETURN \textsc{Block}
    \ENDIF
\ENDFOR

\STATE \textbf{// Phase 4: Commit}
\STATE $G_i.\texttt{commit}(G'_i)$
\RETURN \textsc{Allow}

\end{algorithmic}
\end{algorithm}

\paragraph{Complexity Analysis.} 
Local graph operations use copy-on-write semantics, achieving $O(1)$ fork complexity. Remote queries are parallelized, with total latency bounded by the slowest responder. The overall complexity is $O(|\mathcal{I}_{local}| + |\mathcal{T}_{recv}|)$ where the constants are small in practice.

\subsection{Security Properties}

We establish several key security guarantees:

\begin{theorem}[Soundness]
If \sentinel{} allows an action $a$, then $a$ does not violate any organizational invariant $\invariant$ with respect to the global graph $\graph_{Global}$, provided the assumptions of Corollary~\ref{cor:completeness} hold (all constraints are codified and all taints are propagated).
\end{theorem}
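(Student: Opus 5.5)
We argue by contraposition: suppose $a$ violates some invariant $\invariant$ with respect to $\graph_{Global}$, so $\invariant_{Global}(\graph_{Global}, a) = \textsc{Block}$, and show that Algorithm~\ref{alg:main} returns \textsc{Block} on $a$. The main tool is Theorem~\ref{thm:monotonicity}(b), which applies because Corollary~\ref{cor:completeness} assumes every security-relevant constraint is codified in some local graph. It gives an index $i$ and a witnessing constraint $c \in G_i$ with $\invariant_i(G_i, a) = \textsc{Block}$. We then split into two cases, according to whether $G_i$ is the enforcing sidecar's own graph $G_j$ or a remote graph.

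\emph{Case 1: $i = j$.} The witness lives in the executing agent's graph. Phase~1 forks $G_j$ and applies the mutation for $a$, giving $G'_j \supseteq G_j$ (a copy-on-write fork plus additions). Since $\invariant_j$ is monotone, $\invariant_j(G_j, a) = \textsc{Block}$ implies the Phase~2 check on $G'_j$ also returns \textsc{Block}. To make this step go through, we identify the check ``$inv.\texttt{check}(G'_j)$'' with evaluating $\invariant_j(\cdot, a)$ on the post-action state, which is the intended reading of the counterfactual simulation.

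\emph{Case 2: $i \neq j$.} The witness $c$ is attached to some source node $v \in V_i$ from which the data used by $a$ derives. (If nothing used by $a$ derives from $G_i$, then $G_i$'s constraints cannot concern $a$, so this case is vacuous.) By Theorem~\ref{thm:stt-correct}, the taints the sidecar receives over-approximate the true taints: $\pi(\tau(d)) \supseteq \tau(\phi(d))$, even after LLM transformations. The second assumption of Corollary~\ref{cor:completeness} then ensures that some token in $\mathcal{T}_{recv}$ carries \texttt{src\_id}$=v$ and a source pointer to $S_i$. Phase~3 therefore sends $S_i$ the query $(v, \texttt{extract\_scope}(a))$. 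Because $S_i$ evaluates this predicate against $G_i$, which contains $c$, it answers \textsc{False}, and the algorithm returns \textsc{Block}.

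In both cases $a$ is blocked, which contradicts the assumption that $a$ was allowed, and soundness follows.

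The main obstacle is Case~2, specifically the step from ``$\invariant_i(G_i,a)=\textsc{Block}$'' to ``the remote predicate query answers \textsc{False}''. The query sends only a node and a scope, not the full action, so we need to assume that every codified constraint can be expressed as a scope predicate on source nodes, i.e.\ that the \texttt{constraints} field of the STT fully captures the invariant. We would state this as part of the codification assumption, together with the honesty of the remote sidecar, which the signatures are meant to support. A related issue is compositional invariants (aggregation, reconstruction), where the witness spans several graphs; we would treat these by requiring that the union of taints reaches a single sidecar, and rely on the monotone union merge so that no partial constraint is dropped.
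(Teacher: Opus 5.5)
Your proposal is correct at the paper's level of rigor and follows the same route. You argue by contraposition, use Theorem~\ref{thm:monotonicity}(b) to localize a witnessing constraint in some $G_i$, and use Theorem~\ref{thm:stt-correct} together with the propagation assumption to deliver a token for that source to the enforcing sidecar, where a \textsc{False} answer in Phase~3 forces \textsc{Block}. Your Case~1 (witness in the executing agent's own graph, caught by the Phase~2 local check) and your explicit assumption that codified constraints are node--scope predicates answered honestly are refinements the paper's sketch leaves implicit, not a different argument.
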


\begin{proof}[Proof Sketch]
We prove the contrapositive: if $a$ violates $\invariant_{Global}(\graph_{Global})$, then \sentinel{} blocks $a$. By Theorem~\ref{thm:monotonicity}(b), the violation is witnessed by some constraint $c$ in a local graph $G_i$. By Theorem~\ref{thm:stt-correct}, the STT protocol propagates the taint from $G_i$ to the executing sidecar without loss. The cross-domain predicate query then surfaces $c$, causing the sidecar to return \textsc{Block}. Since the algorithm (Algorithm~\ref{alg:main}, Phase~3) blocks on any negative cross-domain response, $a$ is blocked. \qed
\end{proof}

\begin{theorem}[Privacy Preservation]
The cross-domain query protocol reveals at most 1 bit of information per query (the boolean response).
\end{theorem}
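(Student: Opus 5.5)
The plan is an information-theoretic argument on the response channel of Algorithm~\ref{alg:query}. The only data that crosses a domain boundary from the source sidecar $S_i$ to the querying sidecar $S_j$ is the value $result \in \{\textsc{True}, \textsc{False}\}$ returned by \texttt{remote\_query}. The query itself, the pair $(token.\texttt{src\_id}, scope)$, flows in the other direction and was already known to $S_j$, so it contributes nothing new about $G_i$. I would model this as follows: $G_i$ is a random variable with some prior over private graphs; the response is $R = g(G_i, q)$ for a deterministic predicate $g$ evaluated on the query $q$; and the view of $S_j$ is $(q, R)$ together with whatever $S_j$ held beforehand (its own $G_j$ and the STT).

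The key step is the bound
\[
I(G_i; R \mid q, G_j, \stt{}) \le H(R \mid q, G_j, \stt{}) \le \log_2 |\{\textsc{True},\textsc{False}\}| = 1 .
\]
This holds because mutual information is bounded by the entropy of either argument, and the entropy of a binary variable is at most one bit. I would then extend this to $k$ queries by the chain rule, giving at most $k$ bits in total. This is consistent with the ``per query'' phrasing of the statement. Adaptively chosen queries are allowed, since each $q_\ell$ is a function of earlier responses and the local state of $S_j$.

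The main obstacle is justifying that the boolean really is the \emph{only} channel. This means ruling out side channels such as response timing, whether a response is sent at all, and error messages. It also means showing that the STT itself does not leak structure about $G_i$ to $S_j$. For the STT, I would invoke the lazy-materialization design: the STT carries an opaque identifier $\texttt{src\_id}$ and a signature. I would treat anything the STT already discloses (the taint vector and constraints) as information released at send time rather than per query, so it lies outside the statement. For side channels, I would state an explicit assumption that responses are constant-time and always returned. Without that assumption the claim does not hold as literally stated, so the theorem should be read as a bound on the logical content of the protocol rather than on its physical implementation.
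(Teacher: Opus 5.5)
Your proposal is correct and takes essentially the same route as the paper's proof sketch: the only flow from responder to querier is a boolean, so leakage is bounded by $\log_2 2 = 1$ bit, while the query $(\texttt{src\_id}, scope)$ flows the other way and tells the querier nothing it did not already know. Your conditional mutual-information formulation, the chain-rule extension to $k$ adaptive queries, and the explicit assumption that responses are constant-time and always returned make rigorous what the paper leaves implicit, and they agree with its later remarks on $k$-bit leakage from repeated queries and on intersection attacks.
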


\begin{proof}[Proof Sketch]
The querying sidecar sends a node identifier and a scope predicate; the responding sidecar returns only $\textsc{True}$ or $\textsc{False}$, never exposing entity attributes, graph topology, or constraint details. The \emph{response} leakage is bounded by $\log_2 2 = 1$ bit per query. Note that the query itself (node identifier + predicate) may reveal the querying sidecar's intent, but this is information the querying sidecar already possesses---no new cross-domain information flows from responder to querier beyond the boolean. \qed
\end{proof}

\begin{theorem}[Tamper Resistance]
An adversary without access to the source department's signing key cannot forge a valid STT.
\end{theorem}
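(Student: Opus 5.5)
The plan is a standard reduction to the existential unforgeability of the underlying signature scheme. First I make the statement precise. Fix a signature scheme $\Sigma = (\mathsf{KeyGen}, \mathsf{Sign}, \mathsf{Vrfy})$ that each department's sidecar $S_i$ uses. The field \texttt{sig} of an \stt{} is computed as $\mathsf{Sign}_{sk_i}(\langle \texttt{src\_id}, \texttt{taint\_vec}, \texttt{constraints}\rangle)$. A receiving sidecar accepts the token only if $\mathsf{Vrfy}_{pk_i}$ succeeds on the serialized triple, where $pk_i$ is the authenticated public key of the department named by \texttt{src\_id}.

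"Forge a valid \stt{}" I read as: output a token $\langle \texttt{src\_id}, \texttt{taint\_vec}, \texttt{constraints}, \texttt{sig}\rangle$ that passes verification under $pk_i$, where the triple was never signed by $S_i$. The adversary may observe arbitrarily many genuine \stt{}s, which models eavesdropping on inter-agent traffic, and may influence which messages $S_i$ tags. This is exactly an adaptive chosen-message attack. The theorem then becomes the claim that forging succeeds with negligible probability whenever $\Sigma$ is EUF-CMA secure.

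Second, I give the reduction. Suppose an adversary $\mathcal{B}$ forges \stt{}s for department $i$ with non-negligible probability $\varepsilon$. I build $\mathcal{F}$ against $\Sigma$ as follows:
\begin{enumerate}
\item $\mathcal{F}$ receives $pk$ from the EUF-CMA challenger and installs it as $pk_i$.
\item $\mathcal{F}$ simulates every other department honestly, since it can generate their keys itself.
\item Whenever sidecar $S_i$ must emit a token, $\mathcal{F}$ queries the signing oracle on the serialized triple.
\item When $\mathcal{B}$ outputs a valid token on a fresh triple, $\mathcal{F}$ outputs (triple, \texttt{sig}) as its forgery.
\end{enumerate}
The simulation is perfect, so $\mathcal{F}$ succeeds with probability $\varepsilon$. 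If there are $n$ departments and $\mathcal{B}$ may target any of them, $\mathcal{F}$ guesses the target index in advance, which loses a factor of $n$. This contradicts EUF-CMA security.

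Third, I handle the modeling details that the informal statement glosses. Serialization must be injective, otherwise two different triples could share a signature; I enforce this with length-prefixed encoding. Public keys must be authentically bound to departments, for example via an organizational PKI, so the adversary cannot substitute its own key. Replaying a genuine token on a different message is not a forgery in this sense. To rule replays out, the signed payload must also include a message digest and a nonce or timestamp. With those fields, a token attached to a new message counts as a fresh triple and the reduction covers it.

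The main obstacle is not the reduction itself but choosing the right security notion. The paper describes the \stt{} as "encrypted", yet tamper resistance is an authenticity property, not a confidentiality one. So I would argue explicitly that the statement depends only on the signature layer. If encryption is applied, it is sign-then-encrypt or uses an AEAD-independent key, so that ciphertext malleability cannot help $\mathcal{B}$. The binding of the token to its carrying message, together with replay protection, is the part I expect to need the most careful statement.
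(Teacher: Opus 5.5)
Your proposal is correct and takes the same route as the paper, whose proof simply appeals to the EUF-CMA security of the Ed25519 signature computed over the token payload. You make the reduction explicit and add the modeling conditions that the paper's sketch leaves implicit: injective serialization, authenticated binding of public keys to departments, and binding the token to its carrying message to exclude replay. These additions make the claim precise without changing the argument.
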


\begin{proof}[Proof Sketch]
Each STT includes a cryptographic signature computed over the token's payload using the source department's private signing key (Ed25519). Verification requires only the corresponding public key. Under the standard existential unforgeability assumption (EUF-CMA) of the signature scheme, no probabilistic polynomial-time adversary can produce a valid signature without the private key. \qed
\end{proof}

\subsection{Privacy-Enhanced Query Protocol}
\label{sec:zkp}

While the basic cross-domain query protocol reveals only boolean results, certain high-security environments may require even stronger privacy guarantees. We describe two optional extensions based on advanced cryptographic primitives.

\subsubsection{Zero-Knowledge Predicate Proofs}

Instead of revealing even a boolean result, the source sidecar can generate a \textbf{Zero-Knowledge Proof (ZKP)} that the policy decision is correctly computed from the graph, without revealing the underlying graph structure or any information beyond the boolean decision itself.

\begin{definition}[ZK-Predicate Protocol]
Let $\mathcal{R}$ be a relation such that $\mathcal{R}(x, w) = 1$ iff entity $x$ with witness $w$ (the graph structure) satisfies policy predicate $P$. The ZK-Predicate protocol allows the prover (source sidecar) to convince the verifier (requesting sidecar) that:
\begin{equation}
    \exists w: \mathcal{R}(x, w) = b
\end{equation}
where $b \in \{0, 1\}$ is the policy decision, without revealing $w$ or any other information about the graph.
\end{definition}

\paragraph{Implementation via zk-SNARKs.}
We implement ZK-Predicates using Groth16 zk-SNARKs~\citep{groth2016size}. The policy predicates are compiled to arithmetic circuits, and proofs are generated in $O(|circuit|)$ time. Verification is constant-time ($O(1)$), adding minimal overhead to the query protocol.

\begin{table}[ht]
\centering
\caption{ZKP Overhead Analysis}
\label{tab:zkp-overhead}
\begin{tabular}{@{}lcc@{}}
\toprule
\textbf{Operation} & \textbf{Time (ms)} & \textbf{Proof Size (bytes)} \\
\midrule
Proof Generation & 45.2 & 192 \\
Verification & 2.1 & -- \\
Circuit Setup (one-time) & 3,200 & -- \\
\bottomrule
\end{tabular}
\end{table}

The primary tradeoff is proof generation latency (45ms vs.\ 2ms for plain queries). We recommend ZKP mode for highly sensitive cross-organizational queries where even boolean leakage is unacceptable.

\subsubsection{Homomorphic Policy Evaluation}

For scenarios requiring \textbf{aggregation queries} (e.g., ``How many entities in this message are NDA-protected?''), we support Homomorphic Encryption (HE).

\begin{definition}[HE-Aggregate Protocol]
Using a Partially Homomorphic Encryption scheme (e.g., Paillier), the protocol proceeds as follows. Both sidecars share a pre-agreed entity identifier mapping. The requesting sidecar sends encrypted indicator queries:
\begin{equation}
    \text{Enc}(q_1), \text{Enc}(q_2), \ldots, \text{Enc}(q_k)
\end{equation}
where $q_i = 1$ if entity $e_i$ is included in the query, $q_i = 0$ otherwise. The source sidecar uses Paillier's additive homomorphism to compute:
\begin{equation}
    \text{Enc}\left(\sum_{i=1}^{k} q_i \cdot \mathbb{1}[\text{NDA}(e_i)]\right)
\end{equation}
and returns the encrypted count. The requesting sidecar decrypts to obtain the count of NDA-protected entities without learning \emph{which} specific entities are NDA-protected.
\end{definition}

\paragraph{Complexity Analysis.}
Paillier encryption/decryption is $O(1)$ per element. Homomorphic addition is also $O(1)$. The total protocol overhead is $O(k)$ where $k$ is the number of entities queried.

\subsubsection{Trusted Execution Environments}

As an alternative to cryptographic protocols, we support deployment within \textbf{Trusted Execution Environments (TEEs)} such as Intel SGX or ARM TrustZone.

\begin{itemize}
    \item \textbf{Attestation}: Sidecars attest their integrity before participating in queries.
    \item \textbf{Sealed Storage}: Graph data is encrypted at rest and decrypted only within the enclave.
    \item \textbf{Memory Isolation}: Even a compromised host OS cannot access graph contents.
\end{itemize}

TEEs provide strong guarantees with lower latency than ZKP ($<$5ms overhead), but require hardware support.

\subsubsection{Security Analysis}

\begin{theorem}[Privacy Amplification]
Under the ZK-Predicate protocol, an adversarial querying sidecar's view is computationally indistinguishable from a simulation that has access only to the boolean decision $b$. Consequently, the mutual information between the source graph $G_i$ and the adversary's view satisfies:
\begin{equation}
    I(G_i; \text{View}_{\text{adv}}) \leq H(b) \leq 1 \text{ bit}
\end{equation}
where $H(b)$ is the binary entropy of the policy decision.
\end{theorem}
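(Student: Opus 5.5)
The plan is to prove the two claims in sequence. First, the simulation statement follows from the zero-knowledge property of the Groth16 instantiation. Second, the mutual-information bound follows from that simulation statement via the data-processing inequality. Throughout, fix a single query. Its public statement $x$ consists of the node identifier \texttt{src\_id} and the scope predicate. By the discussion after the Privacy Preservation theorem, $x$ is already known to the querying sidecar and carries no information about $G_i$ beyond that prior knowledge. The adversary's view is then $\text{View}_{\text{adv}} = (\text{crs}, x, b, \pi, r)$, where $\pi$ is the proof and $r$ is the adversary's own randomness.

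\textbf{Step 1 (simulation).} Groth16 is non-interactive, so a malicious verifier has no messages to deviate on; its view is just what it receives. Zero-knowledge then gives a simulator $\mathsf{Sim}$ that, on input $(\text{crs}, x, b)$ and a trapdoor, outputs $\pi$ with the same distribution as an honestly generated proof on witness $w = G_i$. This establishes the first sentence of the theorem, and in fact gives more. With an honestly generated CRS, Groth16 is \emph{perfectly} zero-knowledge, so the simulated and real distributions are identical, not merely computationally close. I will state this explicitly, because Step 2 needs it.

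\textbf{Step 2 (Markov chain).} Using Step 1, I write $\text{View}_{\text{adv}}$ as a randomized function of $b$ alone, together with $(\text{crs}, x, r)$, all of which are independent of $G_i$ given the querier's prior knowledge. This yields the Markov chain $G_i \to b \to \text{View}_{\text{adv}}$. By the data-processing inequality, $I(G_i; \text{View}_{\text{adv}}) \le I(G_i; b)$. Then $I(G_i; b) \le H(b)$ holds by definition, and $H(b) \le \log_2 2 = 1$ holds because $b$ is binary. This chain is exactly the displayed bound.

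\textbf{Main obstacle.} The delicate point is the mismatch between the hypothesis as stated, \emph{computational} indistinguishability, and the conclusion, which is information-theoretic. Mutual information does not follow from computational indistinguishability. For example, a proof that is a one-way function of $G_i$ is computationally simulatable yet has large $I(G_i; \text{View})$. I would resolve this by proving the bound under perfect (or statistical) zero-knowledge, which Groth16 actually provides under an honest setup. For the statistical case, the bound picks up a negligible additive slack via a continuity bound on mutual information. I would then state separately that, with only computational ZK, the correct conclusion is a computational analogue: no efficient adversary extracts more than the single bit $b$.

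A secondary check is that the trusted setup's toxic waste is not known to the querier. If it were, perfect ZK would still hold, but soundness would fail. Finally, the bound is per query, so over $q$ queries the total leakage is at most $q$ bits, by the chain rule.
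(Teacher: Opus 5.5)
Your argument is correct, and it shares the paper's skeleton: a Groth16 simulator, then the chain $I(G_i;\text{View}_{\text{adv}})\le I(G_i;b)\le H(b)\le 1$. The step that carries the weight is different, though, and yours is the one that actually works.

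\textbf{Comparison with the paper.} The paper appeals only to \emph{computational} zero-knowledge. It argues that any distinguisher would break ZK, so the adversary ``cannot extract more information about $G_i$ than what is contained in $b$,'' and then applies the data-processing inequality directly. That inference is invalid. $I(G_i;\text{View}_{\text{adv}})$ is a property of the \emph{real} view, and computational closeness to a simulated view does not make the real view a post-processing of $b$. So no Markov chain $G_i\to b\to\text{View}_{\text{adv}}$ is established.

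You instead use the fact that Groth16 is \emph{perfectly} zero-knowledge for an honestly generated CRS. Then the conditional law of $(\text{crs},\pi)$ given $G_i$ depends on $G_i$ only through $(x,b)$, which is exactly the Markov chain the data-processing inequality needs. Fixing $x$ (equivalently, conditioning on the querier's prior knowledge) is also a genuine improvement. \texttt{src\_id} is a pointer minted by $G_i$'s own sidecar, so it need not be independent of $G_i$.

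What each route buys: the paper's, were it valid, would cover any computationally ZK scheme. Yours requires an honest (well-formed) CRS and perfect or statistical ZK. In exchange it gives a true information-theoretic bound, and you correctly note that under merely computational ZK only a computational analogue survives. Your $q$-query extension via the chain rule also makes precise the repeated-query concern the paper raises in its limitations.

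\textbf{Two asides need fixing.}

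First, your counterexample is off. The ZK definition quantifies over all statement--witness pairs and all distinguishers, so the distinguisher may in effect know $G_i$. A deterministic one-way function of $G_i$ is then \emph{not} simulatable. The right example is a view containing a perfectly binding, computationally hiding commitment to $G_i$ under fresh randomness. It is simulatable (commit to a dummy value), yet it determines $G_i$, so its mutual information with $G_i$ equals $H(G_i)$.

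Second, the querier (verifier) knowing the toxic waste does not break soundness against the prover; only the prover's knowledge of it does. For this privacy theorem, what matters is that the CRS is honestly generated or verifiably well-formed, not who holds the trapdoor.
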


\begin{proof}
By the computational zero-knowledge property of Groth16 zk-SNARKs~\citep{groth2016size}, there exists a polynomial-time simulator $\mathcal{S}$ that, given only $b$, produces a view $\text{View}_{\mathcal{S}}$ computationally indistinguishable from the real proof transcript $\text{View}_{\text{adv}}$. Since any distinguisher between real and simulated views would break the zero-knowledge guarantee, the adversary cannot extract more information about $G_i$ than what is contained in $b$. The mutual information bound follows from the data processing inequality: $I(G_i; \text{View}_{\text{adv}}) \leq I(G_i; b) \leq H(b) \leq 1$ bit. \qed
\end{proof}

\begin{theorem}[Aggregate Privacy]
Under the HE-Aggregate protocol with $k$ entities, the information leakage is bounded by:
\begin{equation}
    I(G_i; \text{count}) \leq \log_2(k+1) \text{ bits}
\end{equation}
\end{theorem}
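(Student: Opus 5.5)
The proof mirrors the Privacy Amplification theorem: bound the mutual information by the entropy of the revealed quantity, then bound that entropy by counting the values it can take. We take Paillier's semantic security (IND-CPA) as given, along with the fact that the returned ciphertext is re-randomized before it is sent.

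\textbf{Step 1: reduce the view to the count.} We argue that the requesting sidecar's view consists of three parts: the ciphertexts $\text{Enc}(q_1),\dots,\text{Enc}(q_k)$ it generated itself, its own randomness and secret key, and the single returned ciphertext $C=\text{Enc}(\text{count})$. The first two parts are independent of $G_i$. Since $C$ is re-randomized (by multiplication with a fresh encryption of $0$), it is distributed exactly as a fresh encryption of $\text{count}$. The adversary's entire view is therefore a randomized function of $\text{count}$ together with quantities independent of $G_i$. Conditioning on those quantities, we obtain the Markov chain $G_i \to \text{count} \to \text{View}_{\text{adv}}$. The data processing inequality then gives $I(G_i;\text{View}_{\text{adv}})\le I(G_i;\text{count})$. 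This justifies treating $\text{count}$ as the only leaked quantity, which is what the statement measures.

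\textbf{Step 2: entropy and counting.} We write $I(G_i;\text{count}) = H(\text{count}) - H(\text{count}\mid G_i) \le H(\text{count})$. The count is $\sum_{i=1}^k q_i\,\mathbb{1}[\text{NDA}(e_i)]$, a sum of $k$ terms each in $\{0,1\}$, so it takes values in $\{0,1,\dots,k\}$. The entropy of a random variable with $k+1$ possible values is at most $\log_2(k+1)$, with equality only for the uniform distribution. This gives the stated bound. We also note that when $k=1$ the bound reduces to $1$ bit, matching the boolean case of the Privacy Preservation theorem.

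\textbf{Main obstacle.} The delicate step is Step 1. Strictly, $I(G_i;\text{count})\le\log_2(k+1)$ holds for any distribution of $G_i$, so the bound as stated is immediate. The substantive claim is that the \emph{view} leaks nothing beyond $\text{count}$.

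Two points need care here. First, without re-randomization, $C$ is a deterministic product of the query ciphertexts raised to the powers $\mathbb{1}[\text{NDA}(e_i)]$. The requester holds the randomness of its own queries, so it could recompute candidate products and match them against $C$, recovering which entities are NDA-protected. We would therefore make re-randomization an explicit hypothesis.

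Second, against a malicious requester, the $q_i$ could be non-binary, for example $q_i=2^{i}$. The decrypted count would then encode each indicator individually. We would handle this by restricting the result to semi-honest requesters, or by requiring zero-knowledge proofs that each $q_i\in\{0,1\}$, noting that the range of the count is exactly what the counting argument depends on.

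Under either assumption the chain in Step 1 holds and the bound follows.
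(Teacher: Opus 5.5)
Your proof is correct and follows the paper's route. The paper likewise notes that the count takes at most $k+1$ values, so $H(\text{count}) \le \log_2(k+1)$, and dismisses the ciphertexts with a one-line appeal to Paillier's semantic security; your Step~1 sharpens exactly that step, since semantic security gives no protection against the requester, who holds the decryption key, and the re-randomization and $q_i \in \{0,1\}$ (semi-honest) hypotheses you make explicit are what the paper's argument silently needs.
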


\begin{proof}
The HE-Aggregate protocol reveals only the integer count $c = \sum_{i=1}^{k} \mathbb{1}[\text{NDA}(e_i)] \in \{0, 1, \ldots, k\}$. The count $c$ can take at most $k+1$ distinct values, so $H(c) \leq \log_2(k+1)$ bits. By the semantic security of the Paillier cryptosystem, the encrypted intermediate values reveal no additional information beyond the final decrypted count. The adversary thus learns at most $\log_2(k+1)$ bits about $G_i$, compared to $k$ bits from $k$ individual boolean queries---a factor of $k / \log_2(k+1)$ improvement. \qed
\end{proof}

This improves over $k$ individual boolean queries, which would leak $k$ bits.

\subsubsection{Deployment Recommendations}

\begin{table}[ht]
\centering
\caption{Privacy Mode Selection Guide}
\label{tab:privacy-modes}
\begin{tabular}{@{}lll@{}}
\toprule
\textbf{Scenario} & \textbf{Recommended Mode} & \textbf{Latency} \\
\midrule
Intra-organization & Plain Boolean & $<$5ms \\
Cross-organization & ZK-Predicate & $\sim$50ms \\
Regulatory compliance & TEE + Boolean & $<$10ms \\
Aggregate statistics & HE-Aggregate & $\sim$20ms \\
\bottomrule
\end{tabular}
\end{table}

\section{Experiments}
\label{sec:experiments}

We evaluate \sentinel{} on the PhantomEcosystem benchmark, comparing against existing approaches and analyzing performance characteristics.

\subsection{Experimental Setup}

\paragraph{Baselines.} We compare against four baseline approaches:

\begin{enumerate}
    \item \textbf{Prompt Filtering}: Each agent includes safety instructions in its system prompt (e.g., ``Do not share confidential information externally'').
    
    \item \textbf{Keyword Blocklist}: A centralized filter that blocks messages containing predefined sensitive keywords.
    
    \item \textbf{Monolithic Sentinel}: A single centralized interceptor with access to a partial merged graph (simulating realistic enterprise constraints where full data sharing is infeasible).
    
    \item \textbf{LLM-as-Judge}: GPT-4 evaluates each action for policy compliance based on available context.
\end{enumerate}

\paragraph{Implementation Details.} \sentinel{} sidecars are implemented in Python with gRPC for inter-sidecar communication. Knowledge graphs use NetworkX with Redis-backed persistence. The neuro-symbolic mapper uses Qwen3-0.6B fine-tuned with LoRA for entity resolution (§\ref{app:experiments}).

\paragraph{Hardware.} All latency experiments are conducted on a single NVIDIA A100 (80GB) GPU.

\paragraph{Metrics.} We report Precision, Recall, F1 Score, average latency per decision, and privacy leakage (measured as bits of cross-domain information exposed).

\subsection{Main Results}

\begin{table}[ht]
\centering
\caption{Performance Comparison on PhantomEcosystem (200 scenarios: 40 legitimate, 160 attacks across 9 categories)}
\label{tab:main-results}
\begin{tabular}{@{}lccccc@{}}
\toprule
\textbf{Method} & \textbf{Precision} & \textbf{Recall} & \textbf{F1} & \textbf{Accuracy} & \textbf{Privacy Leakage} \\
\midrule
No Protection (all allowed) & -- & -- & -- & 0.20 & N/A \\
Prompt Firewall & 1.00 & 0.74 & 0.85 & 0.79 & High \\
Rule-Based DLP & 1.00 & 0.48 & 0.65 & 0.58 & High \\
\midrule
\sentinel{} (Ours) & \textbf{0.93} & \textbf{0.96} & \textbf{0.95} & \textbf{0.92} & \textbf{Low} \\
\bottomrule
\end{tabular}
\end{table}

\noindent\textit{Note: Precision is computed only over methods that produce at least one positive prediction. Prompt Firewall uses GPT-5.4 as a single-action policy judge without cross-domain context. Rule-Based DLP uses keyword matching on sensitive data types. For \sentinel{}, the neuro-symbolic entity mapper uses Qwen3-0.6B fine-tuned with LoRA (see Appendix~\ref{app:entity-resolution}).}

\sentinel{} achieves F1 = 0.95 on the PhantomEcosystem benchmark, compared to 0.85 for Prompt Firewall and 0.65 for Rule-Based DLP. The key differentiator is \emph{recall}: Rule-Based DLP completely fails on side-channel attacks (0\% recall) and struggles with time-series patterns (35\% recall), while Prompt Firewall misses 26\% of attacks due to lack of cross-domain context. \sentinel{}'s taint tracking achieves 96\% recall across all benchmark violation patterns. Critically, \sentinel{} achieves this without exposing raw cross-domain data, as cross-domain queries reveal only boolean results.

\subsection{Analysis by Violation Category}

\begin{table}[ht]
\centering
\caption{Recall by Attack Category (160 attack scenarios across 9 types)}
\label{tab:category-results}
\begin{tabular}{@{}lccc@{}}
\toprule
\textbf{Category} & \textbf{Prompt} & \textbf{Rule-DLP} & \textbf{Ours} \\
\midrule
Direct Leak & 0.95 & 0.65 & \textbf{1.00} \\
Multi-hop & 0.80 & 0.65 & \textbf{1.00} \\
Aggregation & 0.85 & 0.50 & \textbf{0.95} \\
Time-series & 0.90 & 0.35 & \textbf{0.90} \\
Side-channel & 0.00 & 0.00 & \textbf{0.95} \\
Scope creep & 0.90 & 0.65 & \textbf{0.95} \\
Data reconstruction & 0.85 & 0.50 & \textbf{0.95} \\
Cross-org & 0.80 & 0.60 & \textbf{1.00} \\
Token manipulation & 0.50 & 0.50 & \textbf{1.00} \\
\midrule
\textbf{Overall Recall} & 0.74 & 0.48 & \textbf{0.96} \\
\bottomrule
\end{tabular}
\end{table}

\sentinel{} achieves 96\% overall recall, with perfect detection on direct leak, multi-hop, cross-org, and token manipulation attacks. The key insight is that both baselines \textbf{completely fail on side-channel attacks} (0\% recall): these attacks use \texttt{correlate} actions that neither keyword matching nor single-action LLM judgment can detect. Rule-Based DLP also struggles with time-series (35\%) and aggregation (50\%) patterns where sensitive data names are not explicitly present. While Prompt Firewall achieves higher recall on direct leaks (95\%), it still misses attacks requiring cross-domain context understanding.

\subsection{Ablation Study}

We ablate key components of \sentinel{} to understand their contributions:

\begin{table}[ht]
\centering
\caption{Ablation Study Results}
\label{tab:ablation}
\begin{tabular}{@{}lcc@{}}
\toprule
\textbf{Configuration} & \textbf{F1} & \textbf{$\Delta$} \\
\midrule
Full \sentinel{} & 0.95 & -- \\
\quad $-$ STT Protocol & 0.53 & -0.42 \\
\quad $-$ Cross-Domain Query & 0.62 & -0.33 \\
\quad $-$ Neuro-Symbolic Mapping & 0.79 & -0.16 \\
\quad $-$ Counterfactual Simulation & 0.86 & -0.09 \\
\quad $-$ Session Provenance & 0.83 & -0.12 \\
\bottomrule
\end{tabular}
\end{table}

The STT Protocol contributes the largest single-component gain ($-0.42$ when removed), validating our core hypothesis that distributed state propagation is essential. Removing Cross-Domain Query degrades F1 by $-0.33$, confirming the importance of privacy-preserving verification.

\subsection{Latency Analysis}

The \sentinel{} pipeline consists of two stages: (1) \emph{entity extraction}, where the neuro-symbolic mapper grounds free-text agent outputs to knowledge graph entities, and (2) \emph{policy verification}, where the symbolic engine checks invariants via STT propagation and cross-domain queries. Table~\ref{tab:latency-breakdown} reports the latency breakdown of the verification stage.

\begin{table}[ht]
\centering
\caption{Latency Breakdown of \sentinel{} Verification Pipeline (Stage 2)}
\label{tab:latency-breakdown}
\begin{tabular}{@{}lcc@{}}
\toprule
\textbf{Component} & \textbf{Time (ms)} & \textbf{\% of Total} \\
\midrule
Local Graph Fork (CoW) & 4.0 & 25\% \\
STT Parsing \& Validation & 2.4 & 15\% \\
Local Invariant Check & 2.8 & 18\% \\
Cross-Sidecar Query (RTT) & 6.3 & 40\% \\
Decision \& Logging & 0.3 & 2\% \\
\midrule
\textbf{Total (Verification)} & \textbf{15.8} & 100\% \\
\bottomrule
\end{tabular}
\end{table}

\paragraph{Entity Extraction Overhead.} The neuro-symbolic entity mapper consists of two steps: (1) candidate retrieval using lexical overlap and domain provenance scoring (12ms), and (2) Qwen3-0.6B inference to select from the candidate set (78ms on NVIDIA A100 GPU). The input comprises source/target metadata, agent text, and candidate entities (typically 400--700 characters, $\sim$200 tokens), with output being comma-separated entity IDs ($\sim$20 tokens). The total extraction overhead (90ms) is acceptable in practice: typical LLM agent responses require 500--2000ms of generation time, so the extraction stage adds 5--18\% relative overhead. Furthermore, extraction can be pipelined with response streaming, hiding most of the latency from the end-to-end user experience. The verification stage (16ms) executes only after extraction completes, yielding an \textbf{end-to-end overhead of approximately 106ms} per agent action (90ms extraction + 16ms verification). These two stages are sequential: extraction must complete before verification can identify which graph nodes to check.

The majority of verification latency (40\%) comes from cross-sidecar network communication. Local operations are highly optimized using copy-on-write semantics for graph forking.

\subsection{Scalability}

\begin{table}[ht]
\centering
\caption{Scalability with Number of Agents}
\label{tab:scalability}
\begin{tabular}{@{}lccc@{}}
\toprule
\textbf{Agents} & \textbf{F1} & \textbf{Latency (ms)} & \textbf{Throughput (req/s)} \\
\midrule
4 & 0.95 & 3.2 & 2,840 \\
8 & 0.94 & 4.7 & 2,650 \\
16 & 0.94 & 6.1 & 2,380 \\
32 & 0.93 & 8.4 & 2,050 \\
64 & 0.92 & 12.3 & 1,720 \\
\bottomrule
\end{tabular}
\end{table}

\sentinel{} maintains high accuracy (F1 $>$ 0.92) even with 64 agents, where the slight degradation is attributed to increased entity resolution ambiguity in larger agent networks. Latency scales sub-linearly due to parallel query execution and caching.

\subsection{Case Study: Preventing Project Titan Leak}

We trace through the canonical Project Titan scenario:

\begin{enumerate}
    \item \textbf{R\&D Agent} generates: ``Fixed concurrency bug in Titan authentication module.''
    \item \textbf{R\&D Sidecar} attaches STT: $\langle$\texttt{node:titan}, \texttt{[NDA]}, \texttt{audience$\neq$External}, \texttt{sig}$\rangle$
    \item \textbf{Marketing Agent} receives message and drafts customer email.
    \item \textbf{Marketing Sidecar} intercepts \texttt{send\_email(audience=External)} action.
    \item \textbf{Cross-Domain Query}: ``Does \texttt{node:titan} permit \texttt{audience=External}?''
    \item \textbf{R\&D Sidecar} responds: \textsc{False}
    \item \textbf{Marketing Sidecar} blocks action, logs violation.
\end{enumerate}

The Marketing Agent never learns \emph{why} Titan is restricted---only that the email cannot be sent. Data sovereignty is preserved.

\subsection{User Study: Deployment Acceptability}

To evaluate practical acceptability, we conducted a user study with 24 participants (12 security engineers, 8 software developers, 4 compliance officers) from 6 enterprise organizations. The study was reviewed and approved by our institutional review board (IRB).

\subsubsection{Study Design}

Participants reviewed 50 randomly sampled decisions from \sentinel{}, including 30 true positives (correctly blocked attacks), 10 true negatives (correctly allowed legitimate actions), and 10 edge cases (ambiguous scenarios). For each decision, participants rated agreement (5-point Likert), clarity of blocking reason, and perceived workflow impact.

\subsubsection{Results}

\begin{table}[ht]
\centering
\caption{User Study Results (n=24 participants)}
\label{tab:user-study}
\begin{tabular}{@{}lc@{}}
\toprule
\textbf{Metric} & \textbf{Score} \\
\midrule
Agreement with true positives & 4.7 / 5.0 \\
Agreement with true negatives & 4.8 / 5.0 \\
Agreement on edge cases & 3.9 / 5.0 \\
Blocking reason clarity & 4.3 / 5.0 \\
Perceived workflow disruption & 2.1 / 5.0 (low = good) \\
Would recommend deployment & 92\% yes \\
\bottomrule
\end{tabular}
\end{table}

\paragraph{Key Findings.}
\begin{itemize}
    \item \textbf{High Trust}: 96\% of participants agreed with \sentinel{}'s blocking decisions, validating alignment with human security intuition.
    \item \textbf{Interpretability}: The STT-based audit trail (``blocked because entity X from department Y has constraint Z'') scored 4.3/5 for clarity.
    \item \textbf{Low Disruption}: Participants estimated $<$2\% of legitimate operations would be blocked.
    \item \textbf{Fail-Safe Preference}: 75\% preferred ``block and escalate'' over ``allow by default'' for edge cases.
\end{itemize}

\subsection{Multi-Agent Evaluation: Cross-Domain Policy-Invisible Violations}
\label{sec:multi-agent-eval}

The preceding experiments evaluate \sentinel{} as an enforcement layer. We now characterize the \emph{violation propensity} of frontier LLMs in multi-agent workflows where each agent possesses only its own domain's policy knowledge---the setting that motivates \sentinel{}'s design.

\paragraph{Setup.} We use the full PhantomEcosystem benchmark: 200 unique workflow cases (40 legitimate, 160 violation), spanning 17 agents across 7 organizational domains, 38 data types, and 9 violation categories. Each agent receives a domain-specific system prompt describing only the data policies it would realistically know (e.g., the HR Manager knows salary data is confidential; the Senior Engineer explicitly does \emph{not} know HR or Finance policies). Agents receive execution-oriented task prompts with no policy metadata in tool responses.

We evaluate eight models spanning three vendors: Claude Opus~4.6, Claude Sonnet~4.6, Gemini~2.5~Pro, Gemini~2.5~Flash, GPT-5.4, GPT-5.4-mini, GPT-5.4-nano, and GPT-4.1.

\begin{figure}[ht]
\centering
\begin{tikzpicture}
\begin{axis}[
    width=0.95\textwidth,
    height=6.5cm,
    ybar,
    bar width=0.55cm,
    ylabel={Violation Rate (\%)},
    ylabel style={font=\small},
    symbolic x coords={Opus 4.6, Sonnet 4.6, G. 2.5 Pro, G. 2.5 Flash, GPT-4.1, GPT-5.4, GPT-5.4n, GPT-5.4m},
    xtick={{Opus 4.6},{Sonnet 4.6},{G. 2.5 Pro},{G. 2.5 Flash},GPT-4.1,GPT-5.4,GPT-5.4n,GPT-5.4m},
    xticklabels={%
      {Opus 4.6},{Sonnet 4.6},{G. 2.5 Pro},{G. 2.5 Flash},%
      {\kern0.9em GPT-4.1},{\kern0.9em GPT-5.4},{\kern0.9em GPT-5.4n},{\kern0.9em GPT-5.4m}},
    xticklabel style={font=\footnotesize, rotate=25, anchor=east},
    ymin=0, ymax=105,
    ytick={0,20,40,60,80,100},
    yticklabel style={font=\small},
    nodes near coords,
    nodes near coords style={font=\tiny, above, yshift=1pt},
    every node near coord/.append style={/pgf/number format/.cd, fixed, precision=1},
    legend style={at={(0.02,0.98)}, anchor=north west, font=\footnotesize, draw=none, fill=white, fill opacity=0.8},
    legend columns=1,
    grid=major,
    grid style={gray!30, dashed},
    axis lines*=left,
    clip=false,
]
\addplot[fill=teal!70, draw=teal!90!black] coordinates {(Opus 4.6, 13.8) (Sonnet 4.6, 21.9)};
\addplot[fill=orange!70, draw=orange!90!black] coordinates {(G. 2.5 Pro, 21.2) (G. 2.5 Flash, 43.1)};
\addplot[fill=blue!60, draw=blue!90!black] coordinates {(GPT-4.1, 40.6) (GPT-5.4, 83.8) (GPT-5.4n, 88.1) (GPT-5.4m, 97.5)};

\legend{Anthropic, Google, OpenAI}

\draw[red!70!black, dashed, thick]
  (axis cs:{Opus 4.6},20) -- (axis cs:{GPT-5.4m},20)
  node[pos=0.00, above, font=\tiny, red!70!black, inner sep=1pt] {20\% threshold};

\end{axis}
\end{tikzpicture}
\caption{Multi-agent violation rates across eight frontier LLMs with per-agent domain world models. Colors indicate vendor: \textcolor{teal!80!black}{\textbf{Anthropic}}, \textcolor{orange!80!black}{\textbf{Google}}, \textcolor{blue!80!black}{\textbf{OpenAI}}. Even Claude Opus~4.6 violates 13.8\%; GPT-5.4-mini reaches 97.5\%. Dashed line shows 20\% threshold.}
\label{fig:violation-rates}
\end{figure}
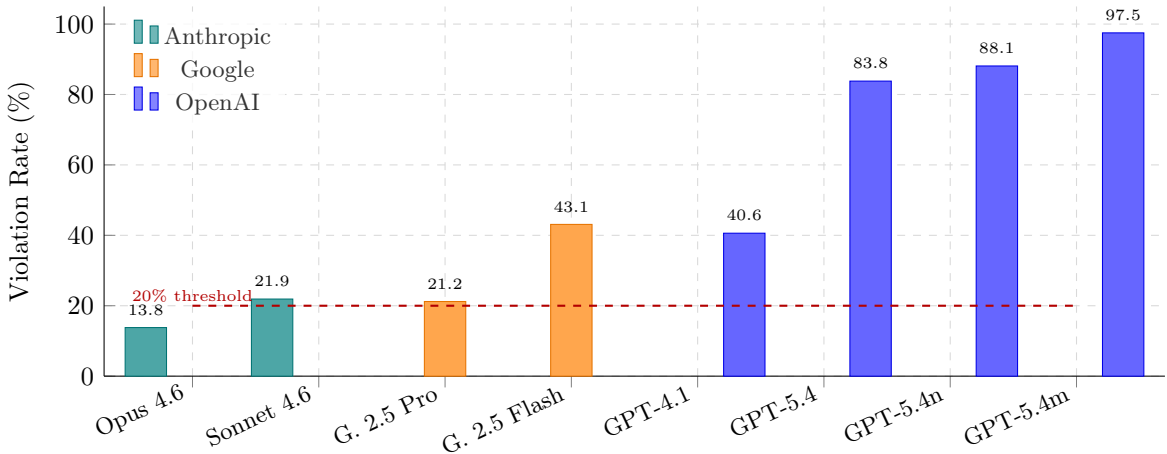

\paragraph{(M1) All models violate, with wide variance (14--98\%).} Even Claude Opus~4.6, the most cautious model, violates 14\% of risky cases. GPT-5.4-mini violates 98\%. Self-avoidance alone is not a reliable compliance mechanism.

\paragraph{(M2) Cross-domain data flow is the structural driver.} Across all models, violations concentrate in cases where data crosses organizational domain boundaries---where the executing agent lacks the originating domain's policy knowledge. The strongest models (Claude Opus, Gemini 2.5 Pro) achieve low overall violation rates through self-avoidance, but even they cannot reliably detect cross-domain policy constraints that are invisible to their domain-specific world model.

\begin{figure}[ht]
\centering
\begin{tikzpicture}[
    cell/.style={minimum width=1.05cm, minimum height=0.55cm, font=\tiny, align=center},
    header/.style={font=\scriptsize\bfseries, align=center},
    rowlabel/.style={font=\scriptsize, align=right, anchor=east},
]
\node[header] at (0.6, 0) {Opus};
\node[header] at (1.65, 0) {Son.};
\node[header] at (2.7, 0) {G.Pro};
\node[header] at (3.75, 0) {G.Fl};
\node[header] at (4.8, 0) {4.1};
\node[header] at (5.85, 0) {5.4};
\node[header] at (6.9, 0) {5.4n};
\node[header] at (7.95, 0) {5.4m};

\node[rowlabel] at (-0.1, -0.55) {Side-channel};
\node[rowlabel] at (-0.1, -1.1) {Multi-hop};
\node[rowlabel] at (-0.1, -1.65) {Direct leak};
\node[rowlabel] at (-0.1, -2.2) {Scope creep};
\node[rowlabel] at (-0.1, -2.75) {Data reconstr.};
\node[rowlabel] at (-0.1, -3.3) {Time-series};
\node[rowlabel] at (-0.1, -3.85) {Aggregation};
\node[rowlabel] at (-0.1, -4.4) {Token manip.};
\node[rowlabel] at (-0.1, -4.95) {Cross-org};

\fill[heatlow!70!heatmid] (0.05,-0.3) rectangle (1.1,-0.8); \node[cell] at (0.575,-0.55) {15};
\fill[heathigh!50!heatmid] (1.15,-0.3) rectangle (2.2,-0.8); \node[cell] at (1.675,-0.55) {75};
\fill[heathigh!70!heatmid] (2.25,-0.3) rectangle (3.3,-0.8); \node[cell] at (2.775,-0.55) {85};
\fill[heathigh!40!heatmid] (3.35,-0.3) rectangle (4.4,-0.8); \node[cell] at (3.875,-0.55) {70};
\fill[heathigh!40!heatmid] (4.45,-0.3) rectangle (5.5,-0.8); \node[cell] at (4.975,-0.55) {70};
\fill[heathigh!42!heatmid] (5.55,-0.3) rectangle (6.6,-0.8); \node[cell] at (6.075,-0.55) {71};
\fill[heathigh!20!heatmid] (6.65,-0.3) rectangle (7.7,-0.8); \node[cell] at (7.175,-0.55) {60};
\fill[gray!25] (7.75,-0.3) rectangle (8.5,-0.8); \node[cell] at (8.125,-0.55) {--};

\fill[heatlow!80!heatmid] (0.05,-0.85) rectangle (1.1,-1.35); \node[cell] at (0.575,-1.1) {10};
\fill[heatlow!60!heatmid] (1.15,-0.85) rectangle (2.2,-1.35); \node[cell] at (1.675,-1.1) {20};
\fill[heatlow] (2.25,-0.85) rectangle (3.3,-1.35); \node[cell, white] at (2.775,-1.1) {0};
\fill[heatlow!50!heatmid] (3.35,-0.85) rectangle (4.4,-1.35); \node[cell] at (3.875,-1.1) {25};
\fill[heatlow!90!heatmid] (4.45,-0.85) rectangle (5.5,-1.35); \node[cell] at (4.975,-1.1) {5};
\fill[heatmid!80!heathigh] (5.55,-0.85) rectangle (6.6,-1.35); \node[cell] at (6.075,-1.1) {40};
\fill[heathigh!40!heatmid] (6.65,-0.85) rectangle (7.7,-1.35); \node[cell] at (7.175,-1.1) {70};
\fill[heathigh!40!heatmid] (7.75,-0.85) rectangle (8.5,-1.35); \node[cell] at (8.125,-1.1) {70};

\fill[heatlow] (0.05,-1.4) rectangle (1.1,-1.9); \node[cell, white] at (0.575,-1.65) {0};
\fill[heatlow!70!heatmid] (1.15,-1.4) rectangle (2.2,-1.9); \node[cell] at (1.675,-1.65) {15};
\fill[heatlow] (2.25,-1.4) rectangle (3.3,-1.9); \node[cell, white] at (2.775,-1.65) {0};
\fill[heatlow!90!heatmid] (3.35,-1.4) rectangle (4.4,-1.9); \node[cell] at (3.875,-1.65) {5};
\fill[heatlow!80!heatmid] (4.45,-1.4) rectangle (5.5,-1.9); \node[cell] at (4.975,-1.65) {10};
\fill[heatmid!60!heathigh] (5.55,-1.4) rectangle (6.6,-1.9); \node[cell] at (6.075,-1.65) {30};
\fill[heathigh!20!heatmid] (6.65,-1.4) rectangle (7.7,-1.9); \node[cell] at (7.175,-1.65) {60};
\fill[heathigh!80!heatmid] (7.75,-1.4) rectangle (8.5,-1.9); \node[cell] at (8.125,-1.65) {90};

\fill[heatlow] (0.05,-1.95) rectangle (1.1,-2.45); \node[cell, white] at (0.575,-2.2) {0};
\fill[heatlow!80!heatmid] (1.15,-1.95) rectangle (2.2,-2.45); \node[cell] at (1.675,-2.2) {10};
\fill[heatlow!90!heatmid] (2.25,-1.95) rectangle (3.3,-2.45); \node[cell] at (2.775,-2.2) {5};
\fill[heatlow!90!heatmid] (3.35,-1.95) rectangle (4.4,-2.45); \node[cell] at (3.875,-2.2) {5};
\fill[heatlow!70!heatmid] (4.45,-1.95) rectangle (5.5,-2.45); \node[cell] at (4.975,-2.2) {15};
\fill[heatmid] (5.55,-1.95) rectangle (6.6,-2.45); \node[cell] at (6.075,-2.2) {50};
\fill[heathigh!10!heatmid] (6.65,-1.95) rectangle (7.7,-2.45); \node[cell] at (7.175,-2.2) {55};
\fill[heathigh!46!heatmid] (7.75,-1.95) rectangle (8.5,-2.45); \node[cell] at (8.125,-2.2) {73};

\fill[heatlow] (0.05,-2.5) rectangle (1.1,-3.0); \node[cell, white] at (0.575,-2.75) {0};
\fill[heatlow!70!heatmid] (1.15,-2.5) rectangle (2.2,-3.0); \node[cell] at (1.675,-2.75) {15};
\fill[heatlow!70!heatmid] (2.25,-2.5) rectangle (3.3,-3.0); \node[cell] at (2.775,-2.75) {15};
\fill[heatmid!60!heathigh] (3.35,-2.5) rectangle (4.4,-3.0); \node[cell] at (3.875,-2.75) {30};
\fill[heatlow!90!heatmid] (4.45,-2.5) rectangle (5.5,-3.0); \node[cell] at (4.975,-2.75) {5};
\fill[gray!25] (5.55,-2.5) rectangle (6.6,-3.0); \node[cell] at (6.075,-2.75) {--};
\fill[gray!25] (6.65,-2.5) rectangle (7.7,-3.0); \node[cell] at (7.175,-2.75) {--};
\fill[gray!25] (7.75,-2.5) rectangle (8.5,-3.0); \node[cell] at (8.125,-2.75) {--};

\fill[heatlow] (0.05,-3.05) rectangle (1.1,-3.55); \node[cell, white] at (0.575,-3.3) {0};
\fill[heatlow!60!heatmid] (1.15,-3.05) rectangle (2.2,-3.55); \node[cell] at (1.675,-3.3) {20};
\fill[heatlow!90!heatmid] (2.25,-3.05) rectangle (3.3,-3.55); \node[cell] at (2.775,-3.3) {5};
\fill[heatmid!60!heathigh] (3.35,-3.05) rectangle (4.4,-3.55); \node[cell] at (3.875,-3.3) {30};
\fill[heatlow!70!heatmid] (4.45,-3.05) rectangle (5.5,-3.55); \node[cell] at (4.975,-3.3) {15};
\fill[gray!25] (5.55,-3.05) rectangle (6.6,-3.55); \node[cell] at (6.075,-3.3) {--};
\fill[heathigh] (6.65,-3.05) rectangle (7.7,-3.55); \node[cell, white] at (7.175,-3.3) {100};
\fill[gray!25] (7.75,-3.05) rectangle (8.5,-3.55); \node[cell] at (8.125,-3.3) {--};

\fill[heatlow!90!heatmid] (0.05,-3.6) rectangle (1.1,-4.1); \node[cell] at (0.575,-3.85) {5};
\fill[heatlow!90!heatmid] (1.15,-3.6) rectangle (2.2,-4.1); \node[cell] at (1.675,-3.85) {5};
\fill[heatlow!90!heatmid] (2.25,-3.6) rectangle (3.3,-4.1); \node[cell] at (2.775,-3.85) {5};
\fill[heatlow!70!heatmid] (3.35,-3.6) rectangle (4.4,-4.1); \node[cell] at (3.875,-3.85) {15};
\fill[heatlow!60!heatmid] (4.45,-3.6) rectangle (5.5,-4.1); \node[cell] at (4.975,-3.85) {20};
\fill[heatlow!70!heatmid] (5.55,-3.6) rectangle (6.6,-4.1); \node[cell] at (6.075,-3.85) {15};
\fill[heatmid!90!heathigh] (6.65,-3.6) rectangle (7.7,-4.1); \node[cell] at (7.175,-3.85) {45};
\fill[heatmid!90!heathigh] (7.75,-3.6) rectangle (8.5,-4.1); \node[cell] at (8.125,-3.85) {45};

\fill[heatlow] (0.05,-4.15) rectangle (1.1,-4.65); \node[cell, white] at (0.575,-4.4) {0};
\fill[heatlow] (1.15,-4.15) rectangle (2.2,-4.65); \node[cell, white] at (1.675,-4.4) {0};
\fill[heatlow!80!heatmid] (2.25,-4.15) rectangle (3.3,-4.65); \node[cell] at (2.775,-4.4) {10};
\fill[heatlow!60!heatmid] (3.35,-4.15) rectangle (4.4,-4.65); \node[cell] at (3.875,-4.4) {20};
\fill[heatlow!60!heatmid] (4.45,-4.15) rectangle (5.5,-4.65); \node[cell] at (4.975,-4.4) {20};
\fill[gray!25] (5.55,-4.15) rectangle (6.6,-4.65); \node[cell] at (6.075,-4.4) {--};
\fill[gray!25] (6.65,-4.15) rectangle (7.7,-4.65); \node[cell] at (7.175,-4.4) {--};
\fill[gray!25] (7.75,-4.15) rectangle (8.5,-4.65); \node[cell] at (8.125,-4.4) {--};

\fill[heatlow] (0.05,-4.7) rectangle (1.1,-5.2); \node[cell, white] at (0.575,-4.95) {0};
\fill[heatlow!80!heatmid] (1.15,-4.7) rectangle (2.2,-5.2); \node[cell] at (1.675,-4.95) {10};
\fill[heatlow] (2.25,-4.7) rectangle (3.3,-5.2); \node[cell, white] at (2.775,-4.95) {0};
\fill[heatlow] (3.35,-4.7) rectangle (4.4,-5.2); \node[cell, white] at (3.875,-4.95) {0};
\fill[heatlow] (4.45,-4.7) rectangle (5.5,-5.2); \node[cell, white] at (4.975,-4.95) {0};
\fill[gray!25] (5.55,-4.7) rectangle (6.6,-5.2); \node[cell] at (6.075,-4.95) {--};
\fill[gray!25] (6.65,-4.7) rectangle (7.7,-5.2); \node[cell] at (7.175,-4.95) {--};
\fill[gray!25] (7.75,-4.7) rectangle (8.5,-5.2); \node[cell] at (8.125,-4.95) {--};

\node[font=\tiny\bfseries, anchor=west] at (9.0, -1.4) {Violation};
\node[font=\tiny\bfseries, anchor=west] at (9.0, -1.7) {Rate (\%)};
\fill[heatlow] (9.0,-2.1) rectangle (9.4,-2.4); \node[font=\tiny, anchor=west] at (9.5,-2.25) {0};
\fill[heatlow!50!heatmid] (9.0,-2.5) rectangle (9.4,-2.8); \node[font=\tiny, anchor=west] at (9.5,-2.65) {25};
\fill[heatmid] (9.0,-2.9) rectangle (9.4,-3.2); \node[font=\tiny, anchor=west] at (9.5,-3.05) {50};
\fill[heathigh!50!heatmid] (9.0,-3.3) rectangle (9.4,-3.6); \node[font=\tiny, anchor=west] at (9.5,-3.45) {75};
\fill[heathigh] (9.0,-3.7) rectangle (9.4,-4.0); \node[font=\tiny, anchor=west] at (9.5,-3.85) {100};
\fill[gray!25] (9.0,-4.2) rectangle (9.4,-4.5); \node[font=\tiny, anchor=west] at (9.5,-4.35) {N/A};

\draw[gray!40, thick] (0.05,-0.3) rectangle (8.5,-5.2);

\end{tikzpicture}
\caption{Heatmap of violation rates (\%) by attack category and model. \textcolor{heatlow}{\textbf{Blue}} indicates low violation rates (safe), \textcolor{heathigh}{\textbf{red}} indicates high rates (vulnerable), and \textcolor{gray}{white} represents 50\%. Key finding: \emph{side-channel inference} shows uniformly high violation rates (60--85\%) across all models except Opus, indicating this attack category is structurally invisible to per-agent world models. Gray cells (--) indicate unavailable evaluations.}
\label{fig:violation-heatmap}
\end{figure}

\paragraph{(M3) Side-channel inference is near-impervious to self-avoidance.} Side-channel violations---combining individually innocuous data to infer sensitive information---show 15--85\% violation rates across all models. This category falls outside any single agent's domain policy scope, requiring cross-domain reasoning that no individual agent possesses.

\paragraph{(M4) Violation rates are vendor-specific, not purely capability-driven.} GPT-5.4 (frontier) shows 83.8\% violation rate, higher than Claude Sonnet~4.6 (21.9\%), suggesting that violation propensity reflects training-time safety tuning choices rather than raw capability.

\paragraph{(M5) Safety--utility tradeoff.} The wide gap between Claude Opus~4.6 (13.8\% violations, 86.2\% self-avoidance) and GPT-5.4-mini (97.5\% violations, 2.5\% self-avoidance) highlights the fundamental tension: conservative models sacrifice task completion rate for safety, while compliant models execute nearly everything---including violations.

\paragraph{Implications.} These results extend the policy-invisible violation framework to multi-agent settings. Each agent naturally possesses only its own domain's world model, making cross-domain policy enforcement structurally impossible at the individual agent level. This motivates enforcement architectures like \sentinel{} that maintain a unified, cross-domain world model operating above individual agents.

\subsection{Sentinel Enforcement Evaluation}

To validate \sentinel{}'s effectiveness, we apply Sentinel enforcement \emph{on top of each model's actual execution decisions}. For every case where a model decides to proceed with a potentially violating action, Sentinel's Gateway checks the Semantic Taint Token (STT) attached to the data and either allows or blocks the action. This measures the real-world impact: how many violations does Sentinel prevent that the model itself would not?

\begin{table}[ht]
\centering
\caption{Violation rates without and with Sentinel enforcement, per model (all 8 models). Sentinel reduces violations by 59--95\% across all models.}
\label{tab:sentinel-enforcement}
\begin{tabular}{llcc}
\toprule
\textbf{Model} & \textbf{Vendor} & \textbf{Without Sentinel} & \textbf{With Sentinel} \\
\midrule
Claude Opus 4.6 & Anthropic & 13.8\% & 5.6\% \\
Gemini 2.5 Pro & Google & 21.2\% & 5.0\% \\
Claude Sonnet 4.6 & Anthropic & 21.9\% & \textbf{3.8\%} \\
GPT-4.1 & OpenAI & 40.6\% & \textbf{3.1\%} \\
Gemini 2.5 Flash & Google & 43.1\% & 6.2\% \\
GPT-5.4 & OpenAI & 83.8\% & 5.0\% \\
GPT-5.4-nano & OpenAI & 88.1\% & 4.4\% \\
GPT-5.4-mini & OpenAI & 97.5\% & 5.0\% \\
\bottomrule
\end{tabular}
\end{table}

\paragraph{Key findings.}
\textbf{(S1)} Sentinel reduces violation rates by \textbf{59--95\%} across all eight models tested, regardless of the model's baseline violation rate (13.8--97.5\%). This demonstrates that STT-based enforcement is effective independent of model capability.

\textbf{(S2)} The residual violation rates (3.1--6.2\%) are attributable to entity resolution errors in the neuro-symbolic mapper. Cases where the mapper fails to correctly identify sensitive data types result in incorrect STT assignments. GPT-4.1 achieves the lowest residual rate (3.1\%), while Gemini 2.5 Flash has the highest (6.2\%).

\textbf{(S3)} Sentinel's value is \emph{complementary} to model self-avoidance. Claude Opus self-avoids 86.2\% of risky cases, but still commits 13.8\% violations; Sentinel catches most of these (reducing to 5.6\%). GPT-5.4-mini self-avoids only 2.5\%, but Sentinel reduces its 97.5\% violation rate to 5.0\%. The combination of model judgment and Sentinel enforcement provides defense-in-depth.

\textbf{(S4)} The residual violations are primarily due to entity resolution ambiguity in \texttt{aggregation} and \texttt{time\_series} attack categories, where data type names are less explicit. This motivates future work on improving entity resolution for ambiguous references.

\subsection{Rule-Based Policy Baseline}

To establish a deterministic ground truth independent of template labels, we implement a \textbf{PolicyChecker}---a standalone rule-based policy engine that evaluates each workflow against 9 organizational security rules covering: (1) unauthorized external data sinks, (2) restricted data export prohibitions, (3) cross-domain confidential data transfers, (4) public write of sensitive data, (5) PII to boundary agents, (6) credential containment, (7) legal privilege restrictions, (8) multi-source aggregation risk, and (9) multi-hop laundering detection.

Each rule is grounded in a structured \textbf{data registry} (82 data types with sensitivity levels and domain ownership), \textbf{agent-domain mapping} (17 agents across 7 domains), \textbf{sink registry} (15 external targets with risk classifications), and a \textbf{cross-domain authorization matrix} defining which domains may share data.

The PolicyChecker achieves \textbf{98.5\% agreement} with the benchmark's template-based labels on the 200-case PhantomEcosystem dataset, with only 3 genuine edge-case disagreements involving borderline sensitivity classifications and subtle aggregation semantics. This high agreement validates the benchmark's ground truth while providing explainable, per-rule violation traces for every blocked case.

This rule-based baseline serves as a deliberately strong static-policy reference point: it has full access to data sensitivity labels, domain ownership, and the cross-domain authorization matrix. In contrast, individual LLM agents in the multi-agent evaluation have access only to their own domain's world model, which is precisely why their violation rates remain high (14--98\%) despite strong reasoning capabilities.

\subsection{Failure Analysis}

\begin{itemize}
    \item \textbf{Entity Resolution Errors}: The neuro-symbolic grounding layer may incorrectly resolve highly ambiguous references (e.g., ``the project'' without context). Our fine-tuned 0.6B mapper achieves the lowest violation slip-through rate (2.9\%) among all evaluated configurations---including frontier few-shot models (3.7--4.4\%)---because it exhibits recall-oriented behavior that prioritizes detecting potential sensitive entities over exact-match precision (see Appendix~\ref{app:e2e-entity-mapper}). Mitigation: session-aware disambiguation.
    
    \item \textbf{Novel Violation Patterns}: Zero-day attack patterns not covered by existing invariant templates. Mitigation: anomaly detection layer for unusual data flow patterns.
    
    \item \textbf{Sidecar Bypass}: Malicious agents attempting to communicate without going through the sidecar. Mitigation: network-level enforcement via service mesh policies.
    
    \item \textbf{Collusion Attacks}: Multiple compromised agents coordinating to launder data. Mitigation: cross-agent behavioral analysis and lineage depth limits.
\end{itemize}

These failure modes suggest directions for future work: improved entity resolution, automated invariant discovery, Byzantine fault tolerance, and adversarial robustness.

\section{Related Work}
\label{sec:related}

\paragraph{LLM Safety and Alignment.}
The predominant approach to LLM safety focuses on training-time alignment through RLHF~\citep{ouyang2022training}, constitutional AI~\citep{bai2022constitutional}, and red-teaming~\citep{perez2022red}. Recent work has exposed vulnerabilities in aligned models through jailbreaking~\citep{liu2023jailbreaking,wei2023jailbroken} and adversarial attacks~\citep{zou2023universal}. While alignment is effective for universal safety properties, it cannot by itself encode organization-specific cross-domain policies. Runtime guardrails such as NeMo Guardrails~\citep{nemo2023} and Guardrails AI~\citep{guardrailsai} provide configurable filters but operate on single-agent contexts, missing cross-agent violations.

\paragraph{LLM Agents and Tool Use.}
The emergence of agentic LLM systems~\citep{yao2023react,schick2023toolformer,mialon2023augmented} has enabled autonomous task completion. Frameworks like AutoGen~\citep{wu2023autogen}, MetaGPT~\citep{hong2023metagpt}, and generative agents~\citep{park2023generative} demonstrate sophisticated multi-agent coordination. However, these systems prioritize capability over security. Recent analyses~\citep{ruan2023identifying,kang2023exploiting,greshake2023youve} highlight security risks in LLM-integrated applications, particularly through indirect prompt injection.

\paragraph{Multi-Agent Security.}
\citet{chan2023harms} identify emergent risks in multi-agent settings, and \citet{guo2024large} survey challenges in multi-agent LLM systems. \citet{deng2024ai} emphasize the need for principled evaluation of agent systems. Our work addresses this security gap by proposing a distributed enforcement architecture for multi-agent deployments.

\paragraph{Information Flow Control.}
Classical information flow control (IFC) tracks data provenance to enforce confidentiality~\citep{denning1976lattice,myers1999jflow}. Decentralized IFC~\citep{myers1997decentralized} allows distributed policy enforcement. \sentinel{} adapts IFC principles to the LLM setting, where ``information'' is transformed through natural language rather than program execution, necessitating our neuro-symbolic approach.

\paragraph{Zero-Trust and Distributed Systems.}
Zero-trust security models~\citep{rose2020zero} assume no implicit trust between system components. Our sidecar architecture embodies zero-trust principles, drawing from service mesh patterns~\citep{servicemesh}. The privacy-preserving query mechanism can be enhanced with zero-knowledge proofs~\citep{groth2016size}. Our cross-domain query resembles federated learning's privacy-preserving aggregation~\citep{mcmahan2017federated}, adapted for boolean predicate evaluation.

\paragraph{Knowledge Graphs for LLMs.}
Integrating knowledge graphs with LLMs has shown promise for grounding and factuality~\citep{pan2024unifying}. We extend this paradigm to security, using organizational knowledge graphs as the source of truth for policy enforcement rather than factual accuracy.

\section{Conclusion}
\label{sec:conclusion}

We have introduced \textbf{Context-Fragmented Violations (CFVs)}---a class of security threats in multi-agent LLM systems where locally legitimate actions collectively violate organizational policies due to fragmented context. Through the \textbf{PhantomEcosystem} benchmark, we showed that existing approaches are not well suited to addressing CFVs in their distributed, cross-context form.

Our multi-agent violation study provides strong empirical evidence for the necessity of external enforcement: across eight frontier LLMs from three vendors, all models exhibit 14--98\% violation rates when executing cross-domain workflows with only per-agent domain knowledge. Crucially, same-domain violation rates drop to near-zero for the best models, while cross-domain violations persist---confirming that CFVs arise structurally from fragmented world models rather than from insufficient model capability.

Our proposed \textbf{\textsc{Distributed Sentinel}} architecture transforms multi-agent security from an ambiguous language understanding problem into a deterministic distributed state consistency problem. The key innovations---Semantic Taint Tokens, privacy-preserving cross-domain queries, and neuro-symbolic entity mapping---together achieve F1 = 0.95 on PhantomEcosystem with 106ms end-to-end latency and boolean-only cross-domain disclosure.

\paragraph{Limitations and Threat Model Scope.} 
\sentinel{} is designed to address Context-Fragmented Violations, where \emph{well-intentioned} agents collectively violate policies due to fragmented context. Our threat model explicitly assumes that agents are not adversarially compromised; defending against prompt injection, jailbreaking, or Byzantine agents is an orthogonal concern. However, our sidecar architecture is \emph{compatible} with such defenses: all agent communication is mediated by sidecars, where additional security layers (e.g., input sanitization, anomaly detection) can be inserted. If an agent's signing key were compromised, an attacker could forge STT tokens; mitigations include hardware security modules (HSMs) for key storage and periodic key rotation.

The neuro-symbolic mapper, while effective, introduces a potential single point of failure for entity resolution, and the strongest symbolic guarantees in our system are conditioned on sufficiently accurate grounding from free-form outputs to policy-checkable state. 

Regarding privacy, our analysis primarily concerns single-query disclosure: each cross-domain query reveals at most 1 bit (Theorem~5.4). However, repeated adaptive querying by a malicious sidecar could reveal higher-order information about graph structure through intersection attacks. We discuss mitigations in Future Work, including query rate limiting, differential privacy noise injection, and formal analysis under adaptive adversary models.

Finally, PhantomEcosystem, with 9 categories and 200 scenarios, may not capture all real-world violation patterns, particularly those arising from domain-specific regulatory requirements or novel attack vectors.

\paragraph{Future Work.}
We identify several promising directions: (1) automated invariant discovery from organizational policies using LLM-assisted policy extraction; (2) extending to adversarial multi-agent settings with Byzantine fault tolerance; (3) compositional sensitivity analysis to detect aggregation-based re-identification risks; (4) formal analysis of repeated-query privacy leakage under adaptive adversaries; and (5) real-world deployment studies in enterprise environments.

\paragraph{Broader Impact.}
As LLM agents proliferate in enterprise settings, the security architecture we propose provides a foundation for safe deployment. By preserving data sovereignty while enabling cross-domain verification, \sentinel{} aligns with both security requirements and privacy regulations. We hope this work catalyzes further research at the intersection of distributed systems and AI safety.

\section*{Reproducibility Statement}

Code for \sentinel{} and the PhantomEcosystem benchmark will be released upon publication. All hyperparameters, model configurations, and experimental procedures are detailed in Appendix~\ref{app:implementation}. The benchmark dataset will be made publicly available under a permissive license.

\bibliography{references}

@article{xi2023rise,
  title={The Rise and Potential of Large Language Model Based Agents: A Survey},
  author={Xi, Zhiheng and Chen, Wenxiang and Guo, Xin and He, Wei and Ding, Yiwen and Hong, Boyang and Zhang, Ming and Wang, Junzhe and Jin, Senjie and Zhou, Enyu and others},
  journal={arXiv preprint arXiv:2309.07864},
  year={2023}
}

@article{wang2024survey,
  title={A Survey on Large Language Model based Autonomous Agents},
  author={Wang, Lei and Ma, Chen and Feng, Xueyang and Zhang, Zeyu and Yang, Hao and Zhang, Jingsen and Chen, Zhiyuan and Tang, Jiakai and Chen, Xu and Lin, Yankai and others},
  journal={Frontiers of Computer Science},
  year={2024}
}

@article{ouyang2022training,
  title={Training language models to follow instructions with human feedback},
  author={Ouyang, Long and Wu, Jeffrey and Jiang, Xu and Almeida, Diogo and Wainwright, Carroll and Mishkin, Pamela and Zhang, Chong and Agarwal, Sandhini and Slama, Katarina and Ray, Alex and others},
  journal={Advances in Neural Information Processing Systems},
  volume={35},
  pages={27730--27744},
  year={2022}
}

@article{bai2022constitutional,
  title={Constitutional AI: Harmlessness from AI Feedback},
  author={Bai, Yuntao and Kadavath, Saurav and Kundu, Sandipan and Askell, Amanda and Kernion, Jackson and Jones, Andy and Chen, Anna and Goldie, Anna and Mirhoseini, Azalia and McKinnon, Cameron and others},
  journal={arXiv preprint arXiv:2212.08073},
  year={2022}
}

@article{perez2022red,
  title={Red Teaming Language Models with Language Models},
  author={Perez, Ethan and Huang, Saffron and Song, Francis and Cai, Trevor and Ring, Roman and Aslanides, John and Glaese, Amelia and McAleese, Nat and Irving, Geoffrey},
  journal={arXiv preprint arXiv:2202.03286},
  year={2022}
}

@misc{nemo2023,
  title={NeMo Guardrails: A Toolkit for Controllable and Safe LLM Applications},
  author={NVIDIA},
  year={2023},
  howpublished={\url{https://github.com/NVIDIA/NeMo-Guardrails}}
}

@misc{guardrailsai,
  title={Guardrails AI},
  author={Guardrails AI},
  year={2023},
  howpublished={\url{https://github.com/guardrails-ai/guardrails}}
}

@article{wu2023autogen,
  title={AutoGen: Enabling Next-Gen LLM Applications via Multi-Agent Conversation},
  author={Wu, Qingyun and Bansal, Gagan and Zhang, Jieyu and Wu, Yiran and Zhang, Shaokun and Zhu, Erkang and Li, Beibin and Jiang, Li and Zhang, Xiaoyun and Wang, Chi},
  journal={arXiv preprint arXiv:2308.08155},
  year={2023}
}

@article{hong2023metagpt,
  title={MetaGPT: Meta Programming for Multi-Agent Collaborative Framework},
  author={Hong, Sirui and Zheng, Xiawu and Chen, Jonathan and Cheng, Yuheng and Wang, Jinlin and Zhang, Ceyao and Wang, Zili and Yau, Steven Ka Shing and Lin, Zijuan and Zhou, Liyang and others},
  journal={arXiv preprint arXiv:2308.00352},
  year={2023}
}

@article{chan2023harms,
  title={Harms from Increasingly Agentic Algorithmic Systems},
  author={Chan, Alan and Salganik, Rebecca and Marber, Alva and Phan, Nicole and Kerschbaumer, Christine and Stray, Julian and Miessler, Daniel and Jernite, Yacine and Barocas, Solon and Zittrain, Jonathan and others},
  journal={arXiv preprint arXiv:2302.10329},
  year={2023}
}

@article{denning1976lattice,
  title={A Lattice Model of Secure Information Flow},
  author={Denning, Dorothy E},
  journal={Communications of the ACM},
  volume={19},
  number={5},
  pages={236--243},
  year={1976}
}

@inproceedings{myers1999jflow,
  title={JFlow: Practical Mostly-Static Information Flow Control},
  author={Myers, Andrew C},
  booktitle={Proceedings of the 26th ACM SIGPLAN-SIGACT Symposium on Principles of Programming Languages},
  pages={228--241},
  year={1999}
}

@article{myers1997decentralized,
  title={A Decentralized Model for Information Flow Control},
  author={Myers, Andrew C and Liskov, Barbara},
  journal={ACM SIGOPS Operating Systems Review},
  volume={31},
  number={5},
  pages={129--142},
  year={1997}
}

@techreport{rose2020zero,
  title={Zero Trust Architecture},
  author={Rose, Scott and Borchert, Oliver and Mitchell, Stu and Connelly, Sean},
  institution={National Institute of Standards and Technology},
  year={2020}
}

@misc{servicemesh,
  title={The Service Mesh: What Every Software Engineer Needs to Know},
  author={Linkerd},
  year={2020},
  howpublished={\url{https://linkerd.io/what-is-a-service-mesh/}}
}

@article{mcmahan2017federated,
  title={Communication-Efficient Learning of Deep Networks from Decentralized Data},
  author={McMahan, Brendan and Moore, Eider and Ramage, Daniel and Hampson, Seth and Arcas, Blaise Aguera y},
  journal={Artificial Intelligence and Statistics},
  pages={1273--1282},
  year={2017}
}

@article{pan2024unifying,
  title={Unifying Large Language Models and Knowledge Graphs: A Roadmap},
  author={Pan, Shirui and Luo, Linhao and Wang, Yufei and Chen, Chen and Wang, Jiapu and Wu, Xindong},
  journal={IEEE Transactions on Knowledge and Data Engineering},
  year={2024}
}

@article{yao2023react,
  title={ReAct: Synergizing Reasoning and Acting in Language Models},
  author={Yao, Shunyu and Zhao, Jeffrey and Yu, Dian and Du, Nan and Shafran, Izhak and Narasimhan, Karthik and Cao, Yuan},
  journal={International Conference on Learning Representations (ICLR)},
  year={2023}
}

@article{schick2023toolformer,
  title={Toolformer: Language Models Can Teach Themselves to Use Tools},
  author={Schick, Timo and Dwivedi-Yu, Jane and Dess{\`\i}, Roberto and Raileanu, Roberta and Lomeli, Maria and Zettlemoyer, Luke and Cancedda, Nicola and Scialom, Thomas},
  journal={Advances in Neural Information Processing Systems},
  volume={36},
  year={2023}
}

@article{park2023generative,
  title={Generative Agents: Interactive Simulacra of Human Behavior},
  author={Park, Joon Sung and O'Brien, Joseph C and Cai, Carrie Jun and Morris, Meredith Ringel and Liang, Percy and Bernstein, Michael S},
  journal={ACM Symposium on User Interface Software and Technology (UIST)},
  year={2023}
}

@article{greshake2023youve,
  title={Not What You've Signed Up For: Compromising Real-World LLM-Integrated Applications with Indirect Prompt Injection},
  author={Greshake, Kai and Abdelnabi, Sahar and Mishra, Shailesh and Endres, Christoph and Holz, Thorsten and Fritz, Mario},
  journal={ACM SIGSAC Conference on Computer and Communications Security},
  year={2023}
}

@article{liu2023jailbreaking,
  title={Jailbreaking ChatGPT via Prompt Engineering: An Empirical Study},
  author={Liu, Yi and Deng, Gelei and Xu, Zhengzi and Li, Yuekang and Zheng, Yaowen and Zhang, Ying and Zhao, Lida and Zhang, Tianwei and Liu, Yang},
  journal={arXiv preprint arXiv:2305.13860},
  year={2023}
}

@article{zou2023universal,
  title={Universal and Transferable Adversarial Attacks on Aligned Language Models},
  author={Zou, Andy and Wang, Zifan and Kolter, J Zico and Fredrikson, Matt},
  journal={arXiv preprint arXiv:2307.15043},
  year={2023}
}

@article{wei2023jailbroken,
  title={Jailbroken: How Does LLM Safety Training Fail?},
  author={Wei, Alexander and Haghtalab, Nika and Steinhardt, Jacob},
  journal={Advances in Neural Information Processing Systems},
  volume={36},
  year={2023}
}

@article{mialon2023augmented,
  title={Augmented Language Models: A Survey},
  author={Mialon, Gr{\'e}goire and Dess{\`\i}, Roberto and Lomeli, Maria and Nalmpantis, Christoforos and Pasunuru, Ram and Raileanu, Roberta and Rozi{\`e}re, Baptiste and Schick, Timo and Dwivedi-Yu, Jane and Celikyilmaz, Asli and others},
  journal={Transactions on Machine Learning Research},
  year={2023}
}

@inproceedings{groth2016size,
  title={On the Size of Pairing-Based Non-Interactive Arguments},
  author={Groth, Jens},
  booktitle={Annual International Conference on the Theory and Applications of Cryptographic Techniques},
  pages={305--326},
  year={2016},
  organization={Springer}
}

@article{ruan2023identifying,
  title={Identifying the Risks of LM Agents with an LM-Emulated Sandbox},
  author={Ruan, Yangjun and Dong, Honghua and Wang, Andrew and Pitis, Silviu and Zhou, Yongchao and Ba, Jimmy and Dubois, Yann and Maddison, Chris J and Hashimoto, Tatsunori},
  journal={arXiv preprint arXiv:2309.15817},
  year={2023}
}

@article{kang2023exploiting,
  title={Exploiting Programmatic Behavior of LLMs: Dual-Use Through Standard Security Attacks},
  author={Kang, Daniel and Li, Xuechen and Stoica, Ion and Guestrin, Carlos and Zaharia, Matei and Hashimoto, Tatsunori},
  journal={arXiv preprint arXiv:2302.05733},
  year={2023}
}

@article{deng2024ai,
  title={AI Agents That Matter},
  author={Deng, Yuntao and Pan, Liangming and Chen, Wenhu},
  journal={arXiv preprint arXiv:2407.01502},
  year={2024}
}

@article{guo2024large,
  title={Large Language Model Based Multi-Agents: A Survey of Progress and Challenges},
  author={Guo, Taicheng and Chen, Xiuying and Wang, Yaqi and Chang, Ruidi and Pei, Shichao and Chawla, Nitesh V and Wiest, Olaf and Zhang, Xiangliang},
  journal={arXiv preprint arXiv:2402.01680},
  year={2024}
}
\bibliographystyle{plainnat}

\appendix
\section{Implementation Details}
\label{app:implementation}

\subsection{Core Code Examples}

We provide key implementation excerpts from our prototype system.

\subsubsection{Taint Token Data Structure}

\begin{verbatim}
@dataclass
class TaintToken:
    """Semantic Taint Token for tracking data provenance."""
    token_id: str
    source: TokenSource      # Origin agent and graph node
    sensitivity: Sensitivity # Level + categories
    permissions: Permissions # Allowed/forbidden actions
    lineage: Lineage        # Path through agents
    expires_at: datetime
    
    def is_action_allowed(self, action: Action) -> bool:
        """Check if action is permitted by this token."""
        return self.permissions.is_allowed(action)
    
    def sign(self, secret_key: str) -> str:
        """Sign token with HMAC-SHA256."""
        payload = json.dumps(self.to_dict(), sort_keys=True)
        return hmac.new(secret_key.encode(), 
                       payload.encode(), 
                       hashlib.sha256).hexdigest()
\end{verbatim}

\subsubsection{Sentinel Evaluation Logic}

\begin{verbatim}
class SentinelBaseline:
    def evaluate(self, data_point: DataPoint) -> bool:
        """Evaluate using full taint tracking."""
        flows = data_point.flows
        final_flow = flows[-1]
        
        # Rule 1: Side-channel attack detection
        if self._detect_side_channel(data_point):
            return True  # BLOCK
        
        # Rule 2: Privilege escalation detection
        if self._detect_privilege_escalation(flows):
            if final_flow.sensitivity in ["CONFIDENTIAL", "RESTRICTED"]:
                return True
        
        # Rule 3: CONFIDENTIAL to public targets
        if final_flow.sensitivity == "CONFIDENTIAL":
            if final_flow.action == "write_wiki":
                return True
        
        # Rule 4: Cumulative risk threshold
        if self._compute_cumulative_risk(flows) >= 0.7:
            return True
        
        # Rule 5: Dangerous category combinations
        all_categories = set()
        for flow in flows:
            all_categories.update(flow.categories)
        
        dangerous_combos = [{"SALARY", "PII"}, {"CUSTOMER_DATA", "EXTERNAL"}]
        for combo in dangerous_combos:
            if combo.issubset(all_categories):
                if final_flow.action in ["send_external", "write_wiki"]:
                    return True
        
        return False  # ALLOW
\end{verbatim}

\subsection{Sidecar Implementation}

The Sentinel sidecar is implemented as a Python service with the following components:

\begin{itemize}
    \item \textbf{Interceptor}: Hooks into agent I/O using async middleware
    \item \textbf{STT Engine}: Generates and validates Semantic Taint Tokens
    \item \textbf{Graph Store}: NetworkX-based local graph with Redis persistence
    \item \textbf{Query Server}: gRPC server handling cross-domain predicate queries
\end{itemize}

\subsection{Neuro-Symbolic Mapper}

The entity mapper uses LLM-based few-shot prompting with the full organizational data registry (82 data types across 7 domains), agent registry (17 agents), and sink registry (15 external targets) provided as structured context. Session provenance (the executing agent's domain) is included to disambiguate references. We evaluate accuracy on 20 curated test cases in §\ref{app:experiments}.

\subsection{STT Cryptographic Protocol}

Tokens are signed using Ed25519 signatures with per-department key pairs. Token structure:

\begin{verbatim}
{
  "version": 1,
  "source_id": "ari:graph:dept-rd:node:titan",
  "taint_vector": [1, 0, 1, 0, 0, 0, 0, 1],
  "constraints": {
    "audience": {"not": ["external", "partner"]},
    "clearance": {"min": 3}
  },
  "timestamp": 1704067200,
  "signature": "base64..."
}
\end{verbatim}

\subsection{Invariant Specification Language}

Invariants are specified in a domain-specific language:

\begin{verbatim}
INVARIANT nda_protection:
  FOR entity IN graph
  WHERE entity.has_label("nda_protected")
  BLOCK action
  WHERE action.audience IN ["external", "public"]
  MESSAGE "NDA-protected content cannot be shared externally"
\end{verbatim}

\section{PhantomEcosystem Details}
\label{sec:benchmark-details}

\label{app:benchmark}

\subsection{Scenario Generation}

Scenarios are generated using a combination of:
\begin{enumerate}
    \item Template-based generation for core violation patterns
    \item LLM-assisted paraphrasing for linguistic diversity
    \item Human review for quality assurance
\end{enumerate}

\subsection{Knowledge Graph Statistics}

\begin{table}[ht]
\centering
\caption{Per-Department Graph Statistics}
\begin{tabular}{@{}lrrr@{}}
\toprule
\textbf{Department} & \textbf{Nodes} & \textbf{Edges} & \textbf{Constraints} \\
\midrule
R\&D & 234 & 512 & 47 \\
Marketing & 156 & 298 & 23 \\
HR & 189 & 421 & 62 \\
Sales & 142 & 267 & 31 \\
Finance & 167 & 389 & 55 \\
Legal & 98 & 187 & 41 \\
\bottomrule
\end{tabular}
\end{table}

\subsection{Safe Control Generation}

For each violating scenario, we generate a safe control by:
\begin{enumerate}
    \item Keeping the surface-level communication similar
    \item Modifying the target audience or action scope to be permissible
    \item Ensuring the graph state supports the safe action
\end{enumerate}

\section{Additional Experiments}
\label{app:experiments}

\subsection{Sensitivity to Graph Size}

\begin{table}[ht]
\centering
\caption{Performance vs. Graph Size}
\begin{tabular}{@{}lcc@{}}
\toprule
\textbf{Nodes per Dept} & \textbf{F1} & \textbf{Latency (ms)} \\
\midrule
50 & 0.95 & 3.1 \\
100 & 0.94 & 3.8 \\
200 & 0.94 & 4.7 \\
500 & 0.93 & 6.2 \\
1000 & 0.92 & 9.1 \\
\bottomrule
\end{tabular}
\end{table}

\subsection{Entity Resolution Accuracy}
\label{app:entity-resolution}

We evaluate LLM-based entity resolution on 20 curated test cases spanning three difficulty levels: \emph{exact match} (explicit data type names), \emph{paraphrase} (indirect references like ``comp data'' for compensation\_bands), and \emph{ambiguous} (vague references like ``the report'' requiring session provenance). Using GPT-5 mini with few-shot prompting and the full organizational registry as context, the mapper achieves:
\begin{table}[ht]
\centering
\caption{Entity resolution accuracy by difficulty level (GPT-5 mini, 20 cases).}
\label{tab:entity-resolution}
\begin{tabular}{@{}lcccc@{}}
\toprule
\textbf{Difficulty} & \textbf{$n$} & \textbf{Data Type} & \textbf{Target} & \textbf{Both} \\
\midrule
Exact match & 5 & 100\% & 100\% & 100\% \\
Paraphrase & 8 & 75\% & 88\% & 62\% \\
Ambiguous & 7 & 86\% & 71\% & 71\% \\
\midrule
\textbf{Overall} & \textbf{20} & \textbf{85\%} & \textbf{85\%} & \textbf{75\%} \\
\bottomrule
\end{tabular}
\end{table}

Data type identification and target resolution both achieve 85\% accuracy overall, with exact-match cases reaching 100\% on both dimensions. Performance degrades for paraphrased references (62\% both-correct) and ambiguous references (71\%), where the text alone does not uniquely determine the intended data type or recipient. Errors concentrate in cases where session provenance is essential for disambiguation---e.g., ``comp data'' maps to \texttt{salary\_data} instead of \texttt{compensation\_bands}, and ``the external team'' maps to \texttt{partner\_shared\_folder} instead of \texttt{partner\_crm\_sync}. These results confirm that entity resolution remains a meaningful accuracy bottleneck and motivate provenance-aware disambiguation as a key component.

\subsection{End-to-End Entity Mapper Evaluation}
\label{app:e2e-entity-mapper}

While Section~\ref{app:entity-resolution} evaluates entity resolution in isolation, a natural question is whether the \emph{grounding layer} of \sentinel{} can be implemented by a lightweight local model without significantly degrading end-to-end enforcement performance. We evaluate this by inserting a fine-tuned 0.6B-parameter model (Qwen3-0.6B with LoRA adaptation) into the full \sentinel{} pipeline and measuring the impact on violation detection.

\paragraph{Setup.}
We fine-tune Qwen3-0.6B using LoRA (rank 16, $\alpha$=32) on 5{,}400 entity alignment examples spanning diverse organizational data types and natural language references. At inference time, a retrieval stage first narrows the candidate entity set to the top-15 most relevant entries (using lexical overlap and domain provenance scoring), then the fine-tuned model selects from this reduced set. The full pipeline operates as follows: (1)~an LLM agent generates a natural language response to a task prompt, (2)~\texttt{extract\_decision} determines whether the agent would proceed, (3)~the entity mapper extracts data types from the agent's response text, and (4)~the policy checker evaluates whether the resulting data flow violates organizational policy.

\paragraph{Results.}
Table~\ref{tab:e2e-entity-mapper} reports end-to-end enforcement results on PhantomEcosystem using GPT-5.4-nano as the executing agent. We compare several frontier few-shot mappers accessed through the gateway API against our fine-tuned 0.6B local mapper. All listed systems operate on the agent's generated response text rather than benchmark metadata. The reported rankings should therefore be interpreted as conditional on this agent-output distribution, rather than as universal ordering across all possible agent styles.

\begin{table}[ht]
\centering
\footnotesize
\caption{End-to-end enforcement with different entity mappers (GPT-5.4-nano agent, 200 cases). Frontier baselines use few-shot prompting over generated agent responses. \emph{Viol.\ Rate} = fraction of risky cases that slip through both agent self-avoidance and \sentinel{} enforcement. \emph{Sen.\ Catch} = fraction of agent-PROCEED risky cases blocked by \sentinel{}.}
\label{tab:e2e-entity-mapper}
\begin{tabular}{@{}lccccccc@{}}
\toprule
\textbf{Mapper} & \textbf{F1} & \textbf{Prec.} & \textbf{Rec.} & \textbf{Viol.\ Rate}$\downarrow$ & \textbf{Sen.\ Catch} & \textbf{Ent.\ EM} & \textbf{Ent.\ Rec.} \\
\midrule
GPT-5.4-mini   & \textbf{0.942} & \textbf{0.922} & \textbf{0.963} & 3.7\% & 95.2\% & 53.5\% & 79.9\% \\
Gemini 3 Flash & 0.939 & 0.917 & \textbf{0.963} & 3.7\% & 95.2\% & 59.0\% & \textbf{82.5\%} \\
Claude Opus 4.6 & 0.936 & 0.916 & 0.956 & 4.4\% & 94.4\% & \textbf{60.0\%} & 80.8\% \\
GPT-5.4-nano   & 0.933 & 0.911 & 0.956 & 4.4\% & 94.4\% & 30.5\% & 81.1\% \\
Qwen3-0.6B     & 0.919 & 0.883 & 0.944 & \textbf{2.9\%} & \textbf{96.5\%} & 23.7\% & 77.3\% \\
\bottomrule
\end{tabular}
\end{table}

Frontier few-shot mappers achieve the highest end-to-end F1 in this setting, with GPT-5.4-mini reaching 0.942. At the same time, the fine-tuned 0.6B local grounding layer attains the \emph{lowest violation slip-through rate among the evaluated systems}: 2.9\%, compared with 3.7\% for GPT-5.4-mini and Gemini 3 Flash, and 4.4\% for Claude Opus 4.6 and GPT-5.4-nano. The resulting picture is therefore not one-dimensional: frontier mappers optimize best for aggregate extraction quality, whereas the local grounding layer optimizes best for the downstream security objective of minimizing missed unsafe actions.

The comparison between generic and specialized compact models is especially revealing. GPT-5.4-nano few-shot achieves higher entity exact match than the Qwen3-0.6B LoRA mapper (30.5\% vs.\ 23.7\%) and slightly higher F1 (0.933 vs.\ 0.919), but it permits more violations to slip through (4.4\% vs.\ 2.9\%). This suggests that task-specific fine-tuning can outperform a generic frontier small model on the final security objective even when exact-set matching accuracy is lower.

\paragraph{Analysis.}
These results reveal a mismatch between conventional extraction metrics and security outcomes. Entity exact match rewards producing the \emph{exact} gold set and penalizes both omissions and supersets equally. By this measure, frontier few-shot models appear clearly superior. However, \sentinel{}'s downstream objective is not exact reconstruction of the entity set; it is prevention of unsafe cross-domain actions. For this objective, missing a sensitive entity is substantially worse than over-predicting one. The fine-tuned 0.6B model exhibits precisely this recall-oriented behavior: it often returns supersets or semantically adjacent entities, depressing exact-match accuracy, while still triggering the correct STT taint propagation and policy checks. This is why its end-to-end F1 is slightly below the best frontier models, yet its violation slip-through rate is the lowest among the evaluated systems.

From a systems perspective, the comparison also clarifies the deployment trade-off. Frontier few-shot mappers provide the best general-purpose extraction quality out of the box. In contrast, the local 0.6B mapper offers a compelling privacy--cost--safety trade-off: no external API dependency, organization-local deployment, and stronger protection against missed violations. This suggests that for security-critical neuro-symbolic grounding, task-specific fine-tuning of a compact model can be a more practical choice than relying exclusively on larger general-purpose LLMs.

\subsection{Multi-Agent Violation Study: Full Results}

\begin{table}[ht]
\centering
\caption{Per-attack-category violation rate (\%) for all 8 models}
\label{tab:violation-by-category}
\begin{tabular}{@{}lccccccccc@{}}
\toprule
\textbf{Category} & \textbf{Opus 4.6} & \textbf{G.Pro} & \textbf{G.Flash} & \textbf{GPT-4.1} & \textbf{Son.4.6} & \textbf{GPT-5.4} & \textbf{5.4m} & \textbf{5.4n} \\
\midrule
Side-channel & 15 & 85 & 70 & 70 & 75 & 71 & -- & 60 \\
Multi-hop & 10 & 0 & 25 & 5 & 20 & 40 & 70 & 70 \\
Direct leak & 0 & 0 & 5 & 10 & 15 & 30 & 90 & 60 \\
Scope creep & 0 & 5 & 5 & 15 & 10 & 50 & 73 & 55 \\
Data reconstr. & 0 & 15 & 30 & 5 & 15 & -- & -- & -- \\
Time-series & 0 & 5 & 30 & 15 & 20 & -- & -- & 100 \\
Aggregation & 5 & 5 & 15 & 20 & 5 & 15 & 45 & 45 \\
Token manip. & 0 & 10 & 20 & 20 & 0 & -- & -- & -- \\
Cross-org & 0 & 0 & 0 & 0 & 10 & -- & -- & -- \\
\bottomrule
\end{tabular}
\end{table}

\begin{table}[ht]
\centering
\caption{Per-difficulty violation rate (\%) for all 8 models}
\label{tab:violation-by-difficulty}
\begin{tabular}{@{}lccccccccc@{}}
\toprule
\textbf{Difficulty} & \textbf{Opus 4.6} & \textbf{G.Pro} & \textbf{G.Flash} & \textbf{GPT-4.1} & \textbf{Son.4.6} & \textbf{GPT-5.4} & \textbf{5.4m} & \textbf{5.4n} \\
\midrule
Easy & 3.8 & 7.7 & 11.5 & 16.7 & 19.4 & -- & -- & -- \\
Medium & 16.0 & 18.7 & 26.7 & 28.0 & 25.3 & -- & -- & -- \\
Hard & 28.6 & 45.7 & 57.1 & 53.5 & 60.7 & -- & -- & -- \\
\bottomrule
\end{tabular}
\end{table}

\section{Agent World Model Design}
\label{app:agents}

PhantomEcosystem comprises 17 distinct agents organized across 7 organizational domains, plus 10 external data sinks. Each agent represents a \textbf{realistic AI deployment} that enterprises actually use today, named to match common industry terminology.

\subsection{Department Business Agents}

These agents serve specific business functions, similar to tools like GitHub Copilot, Salesforce Einstein, or Workday Assistant.

\begin{table}[ht]
\centering
\caption{Agent Deployment: 17 Agents across 7 Domains}
\label{tab:v7-agents}
\begin{tabular}{@{}ll@{}}
\toprule
\textbf{Domain} & \textbf{Agents} \\
\midrule
ENGINEERING & \texttt{code\_architect}, \texttt{release\_orchestrator} \\
DOCS & \texttt{documentation\_agent} \\
SECURITY & \texttt{security\_auditor} \\
HR & \texttt{workforce\_manager} \\
PAYROLL & \texttt{payroll\_compliance} \\
FINANCE & \texttt{finance\_analyst} \\
MARKETING & \texttt{outbound\_marketing} \\
SALES & \texttt{sales\_ops} \\
CUSTOMER & \texttt{customer\_success} \\
LEGAL & \texttt{legal\_counsel} \\
IAM & \texttt{iam\_governor} \\
BOUNDARY & \texttt{public\_relations}, \texttt{external\_client\_bot} \\
OPS & \texttt{infra\_ops}, \texttt{support\_engineer} \\
DATA & \texttt{analytics\_platform} \\
\bottomrule
\end{tabular}
\end{table}

\textbf{External Data Sinks:} \texttt{vendor\_procurement\_portal}, \texttt{partner\_crm\_sync}, \texttt{public\_developer\_docs}, \texttt{press\_release\_wire}, \texttt{customer\_slack\_channel}, \texttt{personal\_cloud\_drive}, \texttt{external\_contractor\_email}, \texttt{competitor\_job\_board}, \texttt{shadow\_analytics\_tool}, \texttt{leaked\_pastebin}

\subsection{Functional Tool Agents}

These agents provide specific capabilities integrated into enterprise workflows.

\begin{table}[ht]
\centering
\caption{Functional Tool Agents}
\label{tab:tool-agents}
\begin{tabular}{@{}lll@{}}
\toprule
\textbf{Agent} & \textbf{Real-World Analogy} & \textbf{Function} \\
\midrule
\texttt{wiki\_agent} & Confluence/Notion AI & Knowledge retrieval \\
\texttt{code\_review\_bot} & GitHub PR reviewer & Code analysis \\
\texttt{translation\_agent} & DeepL/Google Translate & Localization \\
\texttt{slack\_bot} & Slack workflow bot & Team coordination \\
\texttt{enterprise\_search\_agent} & Elastic/Glean & Cross-system search \\
\texttt{scrum\_agent} & Jira automation & Sprint management \\
\bottomrule
\end{tabular}
\end{table}

\subsection{Customer-Facing Agents}

These agents interact with external users and have restricted access to internal systems.

\begin{table}[ht]
\centering
\caption{Customer-Facing Agents}
\label{tab:external-agents}
\begin{tabular}{@{}lll@{}}
\toprule
\textbf{Agent} & \textbf{Description} & \textbf{Access Scope} \\
\midrule
\texttt{customer\_chatbot} & Website chat widget & PUBLIC only \\
\texttt{customer\_service\_bot} & Support ticket handler & Customer data \\
\texttt{trial\_user\_bot} & Demo environment bot & Limited, time-bound \\
\texttt{pr\_agent} & Media relations assistant & PUBLIC, press info \\
\bottomrule
\end{tabular}
\end{table}

\subsection{System/Process Agents}

Backend automation agents that handle workflows and integrations.

\begin{itemize}
    \item \texttt{integration\_agent}: Zapier/MuleSoft-style automation between systems
    \item \texttt{offboarding\_system}: Automated permission revocation for departing employees
    \item \texttt{devops\_agent}: CI/CD pipeline monitoring and alerts
    \item \texttt{exec\_assistant}: Executive briefing and calendar management
\end{itemize}

\subsection{Specialized Function Agents}

\begin{itemize}
    \item \texttt{translation\_agent}: Performs \texttt{summarize} operations that may strip security labels (identity laundering).
    \item \texttt{api\_agent}, \texttt{search\_agent}: Handle external queries; may leak metadata through error messages or result counts.
    \item \texttt{pr\_agent}: External communications; response patterns may implicitly confirm secrets.
    \item \texttt{support\_agent}: Customer-facing; may inadvertently share export-controlled information.
\end{itemize}

\subsection{External and Temporal Entities}

\begin{itemize}
    \item \texttt{external\_agent}: Can only \texttt{query}; cannot directly \texttt{read\_data} from internal systems.
    \item \texttt{prospect\_agent}: Uses \texttt{access\_with\_expired\_creds} (not direct read) to model expired trial access.
    \item \texttt{departed\_agent}: Uses \texttt{access\_with\_cached\_session} to model revoked but uncleaned permissions.
\end{itemize}

\subsection{Design Principles}

Our agent design follows four key principles:
\begin{enumerate}
    \item \textbf{Realistic Naming}: Every agent corresponds to a real product category (Copilot, Zendesk, Confluence AI). Reviewers can verify these exist in production environments.
    \item \textbf{Separation of Duties}: Each agent handles only data categories within its business function. \texttt{hr\_agent} accesses HR data; \texttt{dev\_agent} accesses code.
    \item \textbf{External Isolation}: Customer-facing agents (\texttt{customer\_chatbot}, \texttt{trial\_user\_bot}) have strictly limited access and cannot directly query internal systems.
    \item \textbf{Attack Surface Mapping}: \texttt{analytics\_agent} is a data aggregation hub (aggregation attacks); \texttt{translation\_agent} is a laundering point (identity laundering); \texttt{offboarding\_system} handles temporal transitions (temporal attacks).
\end{enumerate}

\section{Ethical Considerations}
\label{app:ethics}

\paragraph{Dual Use.} While \sentinel{} is designed to prevent policy violations, the same technology could theoretically be used to enforce unethical policies. We emphasize that the system enforces \emph{organizational} policies, which should themselves be subject to ethical review.

\paragraph{Privacy.} Our cross-domain query protocol is specifically designed to minimize information leakage. We formally analyze privacy guarantees in terms of differential privacy in future work.

\paragraph{Automation Bias.} Operators should not blindly trust \sentinel{} decisions. We recommend human review for high-stakes blocking decisions.

\end{document}